\definecolor{cbGreen}{HTML}{4DAC26}
\definecolor{cbRed}{HTML}{D01C8B}
\definecolor{light-gray}{gray}{0.8}
\newcommand{\Tr}{\mbox{Tr}}
\newcommand{\EDaff}{Higgs Centre for Theoretical Physics, School of Physics \& Astronomy, University of Edinburgh, Edinburgh EH9 3FD, United Kingdom.}
\newcommand{\IGCaff}{MRC Human Genetics Unit, Institute of Genetics \& Cancer, University of Edinburgh, Edinburgh EH4 2XU, United Kingdom.}
\newcommand{\Berkeleyaff}{Division of Biostatistics, University of
California, Berkeley, CA, USA}
\newcommand{\EDMathaff}{School of Mathematics and Maxwell Institute, University of Edinburgh, Edinburgh EH9 3FD, United Kingdom.}
\newcommand{\Informaticsaff}{School of Informatics, University of Edinburgh, Edinburgh EH8 9AB, United Kingdom} 
\newcommand{\EqualCorresponding}{Equal contribution, corresponding authors (alphabetical order)} 
\newcommand{\beginsupplement}{%
        \setcounter{table}{0}
        \renewcommand{\thetable}{S\arabic{table}}%
        \setcounter{figure}{0}
        \renewcommand{\thefigure}{S\arabic{figure}}%
     }
\newcommand{\indep}{\perp \!\!\! \perp}
\newcommand{\ID}{\mathds{1}}
\newtheorem{prop}{Proposition}[section]
\newtheorem{lemma}[prop]{Lemma}
\theoremstyle{definition}
\def\ie{{\it i.e.}}
\def\eg{{\it e.g.}}
\DeclareMathOperator{\cv}{cv}
\DeclareMathOperator{\Var}{Var}
\DeclareMathOperator{\Cov}{Cov}
\DeclareMathOperator{\Bias}{Bias}
\DeclareMathOperator{\MSE}{MSE}
\DeclareMathOperator{\ConfI}{CI}
\newcommand{\BE}{{\mathbb{E}}}
\newcommand{\BP}{{\mathbb{P}}}
\newcommand{\BR}{{\mathbb{R}}}
\newcommand{\CL}{{\mathcal L}}
\newcommand{\CM}{{\mathcal M}}
\newcommand{\CN}{{\mathcal N}}
\newcommand{\CT}{{\mathcal T}}
\DeclareMathOperator{\logit}{logit}
\DeclareMathOperator{\expit}{expit}
\begin{document}

\title[TarGene]{Semi-parametric efficient estimation of small genetic effects in large-scale population cohorts}
\author{
Olivier Labayle\textsuperscript{1,2},
Breeshey Roskams-Hieter\textsuperscript{1,2},
Joshua Slaughter\textsuperscript{1,2},
Kelsey Tetley-Campbell\textsuperscript{1,2},
Mark J. van der Laan\textsuperscript{3},
Chris P. Ponting\textsuperscript{1},
Sjoerd Viktor Beentjes\textsuperscript{1,3,4,*},
Ava Khamseh\textsuperscript{1,2,3,*}
}

\address{$^1$\IGCaff}
\address{$^2$\Informaticsaff}
\address{$^3$\Berkeleyaff}
\address{$^4$\EDMathaff}
\address{$^5$\EDaff}
\address{$^{*}$\EqualCorresponding}

\begin{abstract}
Population genetics seeks to quantify DNA variant associations with traits or diseases, as well as interactions among variants and with environmental factors.
Computing millions of estimates in large cohorts in which small effect sizes are expected, necessitates minimising model-misspecification bias to control false discoveries.
We present TarGene, a unified statistical workflow for the semi-parametric efficient and double robust estimation of genetic effects including $k$-point interactions among categorical variables in the presence of confounding and weak population dependence.
$k$-point interactions, or Average Interaction Effects (AIEs), are a direct generalisation of the usual average treatment effect (ATE).
We estimate AIEs with cross-validated and/or weighted versions of Targeted Minimum Loss-based Estimators (TMLE) and One-Step Estimators (OSE).
The effect of dependence among data units on variance estimates is corrected by using sieve plateau variance estimators based on genetic relatedness across the units.
We present extensive realistic simulations to demonstrate power, coverage, and control of type I error.
Our motivating application is the targeted estimation of genetic effects on trait, including two-point and higher-order gene-gene and gene-environment interactions, in large-scale genomic databases such as UK Biobank and All of Us.
All cross-validated and/or weighted TMLE and OSE for the AIE $k$-point interaction, as well as ATEs, conditional ATEs and functions thereof, are implemented in the general purpose \texttt{Julia} package TMLE.jl.
For high-throughput applications in population genomics, we provide the open-source \texttt{Nextflow} pipeline and software TarGene which integrates seamlessly with modern high-performance and cloud computing platforms.
\vspace{0.5cm}
\end{abstract}


\maketitle

\setcounter{tocdepth}{1}


\vspace{-0.8cm}


\section{Introduction}\label{sec:Introduction}

Over the past 15 years, many genetic risk loci have been associated with disease or traits~\citep{Visscher2017}.
Most genetic effect sizes are small and require large samples to be reliably estimated~\citep{Uffelmann2021}.
Their detection helps guide interventions on the downstream disease consequences of the variant that may be both sizeable and clinically relevant~\citep{minikel2024,King2019}.
Nevertheless, a slight bias, such as when an estimation model is misspecified, can appreciably alter genetic effect estimates and, additionally, their associated confidence intervals may not have the theoretical nominal coverage of the ground truth.
This problem of bias is exacerbated for larger sample sizes that yield more confident estimates.
Statistical models in common use, such as logistic, linear and linear mixed models (LMMs)~\citep{Uffelmann2021}, rely on parametric assumptions, including linearity of the variant-covariate-trait relationship, and normality of the associated conditional distribution.
LMMs also usually include a random effect to capture population relatedness~\citep{Uffelmann2021,GRAMMAR-Gamma,fastlmm}.\\ 

When genotype assignment is entirely random (\ie, without population stratification), and when non-linear genotype effects are absent, the commonly employed LMM model is adequate for estimating the true effect size consistently.
Nevertheless, genetic effects are often non-linear~\citep{Neale-nonlinearity-science,Milind2024.11.11.24317065,Mackay2024}, and have been accounted for as an additive-plus-dominance part of an LMM~\citep{10.1371/journal.pgen.1006421}.
Also, warped LMM has generalised LMMs by transforming outcomes to account for non-Gaussian distributed residuals~\citep{Fusi2014}.
Recent applications of ML algorithms, such as DeepNull/XGBoost~\citep{deepnull2022}, can adaptively account for genetics and confounder/covariate non-linearities data-adaptively, yet they lack theoretical guarantees of consistency and coverage.
Furthermore, for any variant and trait combination, the validity of any linearity assumption is a priori unknown, and so provides no guarantee that the LMM is a consistent estimator with nominal coverage. \\

Beyond single variant associations, interactions among multiple variants, known as epistasis, may be even smaller and more complicated to estimate~\citep{Haley_Epistasis} yet are crucial to understanding complex disease~\citep{Mackay2024}.
In summary, robust estimation techniques are needed to accurately estimate genetic effect sizes, together with interactions among variants, and interactions among variants and environmental factors.\\

The Targeted Learning (TL) Roadmap offers a step-by-step guide to estimating causal and statistical parameters from real-world data with minimal bias \citep{Gruber2024}.
This includes the construction of substitution estimators, including Targeted Maximum Likelihood Estimators (TMLE), which are semi-parametric efficient under minimal assumptions~\citep{vanderLaanRubin+2006}.
TL has been successfully applied in various clinical and biomedical settings to generate real-world evidence \citep{Fong2022-TL,Gilbert2022-TL,Havlir2019-TL}. \\

Here, we introduce Targeted Genomic Estimation (TarGene), a method based on targeted semi-parametric estimation theory \citep{OS_Pfanzagl,MR2867111} following the TL roadmap for the estimation of genetic effects of single variants and interactions. 
The five key contributions of this paper are as follows. 
(1) Development of semi-parametric estimators, namely, Targeted Maximum Likelihood Estimator (TMLE), weighted TMLE (wTMLE), one-step estimator (OSE) and their cross-validated (CV) versions, for $k$-point interaction effects on outcome, here among genetic variants, and genetic variants and environmental factors. 
(2) Accounting for population dependence structure, by updating variance estimates via a network adaptation of the sieve plateau variance estimator of~\cite{SievePlateau}.
(3) Extensive simulations to assess type I error, power, and coverage of various semi-parametric estimators and target parameters in realistic population genetics contexts. This includes (rare) binary outcomes, continuous outcomes, and categorical variants with diverse population frequencies.
(4) A general purpose \texttt{Julia} software package, \href{https://targene.github.io/TMLE.jl/stable/}{TMLE.jl}, implements these semi-parametric estimators and $k$-point interaction target parameters.
(5) An end-to-end \texttt{Nextflow} pipeline, software and documentation, \href{https://targene.github.io/targene-pipeline/stable/}{TarGene}, for scalable and seamless application to large-scale biobanks ($0.5$ million or more individuals) such as the UK Biobank~\citep{UKB_bycroft} and the All of Us cohort~\citep{AllOfUs}. \\

As an illustration, we apply TarGene to five applications on UK Biobank data: (i) Performing a Phenome-wide Association Study (PheWAS) for the \textit{FTO} intronic variant rs1421085, a candidate causal variant for obesity~\citep{claussnitzerFTOObesityVariant2015a} on 768 traits, including body mass index (BMI).
We find that p-values currently reported in the literature may be inflated thereby leading to a higher false discovery rate.
(ii) We simultaneously discover non-linear effect sizes of additional allelic copies on trait or disease. Specifically, we demonstrate significant genetic non-linearity at the \textit{FTO} locus for 39 traits in this study.
(iii) For the same locus, we find evidence of gene-by-environment interactions (G$\times$E) using deprivation indices on body weight-related traits.
(iv) We further reproduce five pairs of epistatic loci associated with red hair and find 16 further epistatic interactions on hair or skin colour (G$\times$G).
(v) Finally, we illustrate how TarGene can be used to investigate higher-order interactions (G$\times$G$\times$G) using three variants linked to the vitamin D receptor complex.
These analyses exemplify the breadth of questions that can be addressed with TarGene through its three main \texttt{Nextflow} workflows: PheWAS, GWAS and custom single or joint variant effects, gene-by-gene interactions and gene-by-environment interactions up to any order. \\

The estimators developed and implemented in this work take into account population stratification, via genetic ancestry principal component analysis (PCA), as a confounder of the variant-to-outcome relationship, as well as covariates, such as age and sex, which can influence outcome.
However, in general, this does not imply that the estimated genetic effect sizes reflect causal mechanisms.
This is because of the co-inheritance of dependent variants, known as linkage disequilibrium (LD) blocks in the genome.
While TarGene estimators address any statistical gap due to model misspecification, the causal gap currently remains. 
In the statistics literature, attempts have been made to address the causal gap in population genetics analyses through fine-mapping, \eg, SuSiE~\citep{10.1111/rssb.12388,10.1371/journal.pgen.1010299}, and KnockOffGWAS~\citep{Sesiae2105841118}, a variable-selection method based on the knockoff filter of~\cite{10.1214/15-AOS1337}.
However, SuSiE and other fine-mapping methods such as FINEMAP~\citep{10.1093/bioinformatics/btw018} and genome-wide fine-mapping (GWFM)~\citep{Wu2024.07.18.24310667} rely on modelling assumptions and thus may not close the statistical gap.
While KnockoffGWAS makes no parametric assumptions regarding the distribution of the phenotype conditional on genotypes and controls the false discovery rate (FDR) whilst accounting for population structure, it computes neither effect sizes nor epistatic interactions, and only reports statistical significance.
These quantities are essential for explaining how variants, via biological mechanisms and regulatory functions, modify a trait or disease risk.
A unified method that closes both causal and statistical gaps in genomic medicine has yet to be developed.\\

This paper is organised as follows.
In Section~\ref{sec:Methods}, we first introduce the k-point interaction target parameter, derive its efficient influence function, and exact second-order remainder.
This allows for the construction of various semi-parametric efficient estimators and their cross-validated analogues.
We discuss various hypothesis testing strategies, and adapt sieve variance plateau estimators to account for population dependence in a network setting.
In Section~\ref{sec:Simulations}, we describe two extensive simulations: A null simulation of no genetic effect and a realistic simulation modelled on real UK Biobank data.
On these, we assess our estimators’ performance in terms of power and coverage.
Subsequently, we apply TarGene to perform three common population genetics analyses on UK Biobank data, namely a PheWAS, a gene-by-environment interaction study, and a third order interaction study in Section~\ref{sec:UKB}.
We also compare the genetic effect of the FTO variant on BMI between UKB and the All of Us cohort.
The final Section~\ref{sec:Software} contains details of our \texttt{Julia} package \href{https://targene.github.io/TMLE.jl/stable/}{TMLE.jl} and \texttt{Nextflow} pipeline \href{https://targene.github.io/targene-pipeline/stable/}{TarGene}, as well as runtime considerations.



\newpage

\section{Efficient estimation of $n$-point interactions}\label{sec:Methods}
\subsection{Inferential problem}

Suppose that the observed data unit is
\begin{equation}
    O = (W,A_1,A_2,\ldots,A_m,Y) \sim P_0,
\end{equation}
where $W$ is a vector of pre-treatment covariates, the $A_i \in \{0,1,\ldots t_i\}$ with $i = 1, \ldots, m$ are categorical treatment variables, $Y$ is an outcome of interest, and $P_0 \in \CM$ is the true data-generating probability distribution.
In genomics, $Y$ denotes a disease or trait, the $A_i$ encode DNA variants at particular locations in the genome, and $W$ is a vector of genetic confounders due to ancestry and relatedness, usually captured by principal components.
Since we do not wish to impose any potentially unrealistic assumptions on the statistical model $\CM$ -- and hence $P_0$ -- we take $\CM = \CM_0$, the non-parametric model.
However, our derivations carry through under arbitrary restrictions on the distributions of the $A_i$ given $W$.
Note that in our applications to the UK Biobank in Section~\ref{sec:UKB}, the selected treatment variables are independent given the sources of population stratification $W$. \\

Our statistical estimand of interest is the $k$-point interaction among a subset of the treatment variables $A_1, \ldots, A_m$ in their effect on outcome $Y$ whilst correcting for covariates $W$.
This parameter is a generalised difference of mean outcomes, and best illustrated by an example.
Let $Q_{a_1,a_2}(w) = Q(a_1,a_2,w) = Q_P(a_1,a_2,w) = \BE_P(Y|A_1=a_1,A_2=a_2,W=w)$ denote the outcome regression, and let $Q_W(w) = P(W\leq w)$ denote the cumulative distribution function of the covariates $W$.
The $2$-point interaction parameter $\Psi_a^{(2)} \colon \CM_0 \to \BR$ of $A_1$ and $A_2$ in their effect on $Y$ is defined as
\begin{align}\label{eq:2pt}
    \Psi_a^{(2)}(P) 
    &= \int \Bigl[\bigl\{Q_{1,1}(w)-Q_{0,1}(w)\bigr\}-\bigl\{Q_{1,0}(w)-Q_{0,0}(w)\bigr\} \Bigr]dQ_W(w) \\
    &= \int \Bigl[\bigl\{Q_{1,1}(w)-Q_{1,0}(w)\bigr\}-\bigl\{Q_{0,1}(w)-Q_{0,0}(w)\bigr\} \Bigr]dQ_W(w),
\end{align}
where $P \in \CM_0$ and $a = (A_1 \colon 0 \to 1, A_2 \colon 0 \to 1)$ encodes the change in treatment levels for $A_1, A_2$.
The parameter $\Psi_a^{(2)}(P)$ quantifies the difference between the treatment effect of $A_1 \colon 0 \to 1$ on $Y$ given $A_2 = 1$ and the treatment effect of $A_1 \colon 0 \to 1$ on $Y$ given $A_2 = 0$, correcting for $W$ or, symmetrically, the difference between the treatment effect of $A_2 \colon 0 \to 1$ on $Y$ given $A_1 = 1$ and the treatment effect of $A_2 \colon 0 \to 1$ on $Y$ given $A_1 = 0$, correcting for the marginal distribution of $W$.
This parameter admits the following alternative presentation:
\begin{equation}\label{eq:2pt_alternative}
    \Psi_a^{(2)}(P) = \int \Bigl[\bigl\{Q_{1,1}(w)-Q_{0,0}(w)\bigr\}-\Bigl(\bigl\{Q_{1,0}(w)-Q_{0,0}(w)\bigr\} + \bigl\{Q_{0,1}(w)-Q_{0,0}(w)\bigr\}\Bigr) \Bigr]dQ_W(w).
\end{equation}
Thus, $\Psi_a^{(2)}(P)$ also represents the non-additive effect on $Y$ due to the joint change in the treatment levels of $A_1$ and $A_2$ simultaneously, rather than individually~\citep{hernan2023causal}.
Two-point interactions have previously been studied in the experimental design and causal inference literature~\citep{Cox1984Interaction,VanderWeele_interactions,Dasgupta2015CausalIF,EgamiImai2019}.\\

For example, let $A_1,A_2$ be binary treatment variables and consider a linear model with interaction term in which the coefficients can depend on the covariates $W$, namely 
\begin{equation}
    Y = \alpha(W) + \beta_1(W)A_1+\beta_2(W)A_2+\gamma(W) A_1A_2 + \epsilon, \qquad \epsilon \sim N(0,\sigma^2),
\end{equation}
and $\BE[\epsilon|W] = 0$.
The conditional mean $\BE(Y|A_1,A_2,W)$ is the same expression without the noise term $\epsilon$.
Evaluating the above target parameter of Eq.~\ref{eq:2pt_alternative} yields
\begin{equation}
    \Psi_a^{(2)}(P) = \int\Bigl\{ \bigl(\beta_1(W)+\beta_2(W)+\gamma(W)\bigr) - \bigl( \beta_1(W)+\beta_2(W)\bigr)\Bigr\} dQ_W(w) = \BE_P\bigl\{\gamma(W)\bigr\}.
\end{equation}
When $\gamma(W) = \gamma$ is independent of covariates, this reduces to $\Psi_a^{(2)}(P) = \gamma$.
It follows that the definition in Eq.~\ref{eq:2pt} coincides with the usual notion of $2$-point interaction in a linear model. \\

Next, we give the general definition of the $k$-point interaction among a subset of $m$ treatment variables in their effect on outcome $Y$ whilst correcting for covariates $W$.
Given $k$ of the $m$ treatment variables $A = (A_{j_1},\ldots,A_{j_k})$ all pairwise distinct, specify an initial treatment level $a(0) = (a_{j_1}(0),\ldots,a_{j_k}(0))$ of the $k$ variables and a final treatment level $a(1) = (a_{j_1}(1),\ldots,a_{j_k}(1))$ where $a_{j_i}(l) \in \{0,1,\ldots,t_{j_i}\}$ for each of the $k$ treatment variables $j_i$.
The $k$-point interaction generalises Eq.~\ref{eq:2pt_alternative} and quantifies the non-additive effect due to the joint change in treatment levels of the variables $A$ from $a(0)$ to $a(1)$ relative to the sum of the effects of marginal changes for all subsets of variables with the remaining treatment variables held at their initial level.
Notationally, it is more convenient to consider the generalisation of Eq.~\ref{eq:2pt}, defined in ~\cite{PhysRevE.102.053314}:
\begin{equation}\label{eq:k-point}
    \Psi^{(k)}_{a(0),a(1)}(P) = \sum_{s \in \{0,1\}^k} (-1)^{k - (s_1+\ldots+s_k)} \Psi_{a(s)}(P),
\end{equation}
where we abbreviate the levels of the $k$ treatment variables to $a(s) = (a_{j_1}(s_1),\ldots, a_{j_k}(s_k))$ and we denote the treatment-specific, covariate-adjusted mean outcome implied by $P \in \CM_0$ by
\begin{align}
    \Psi_{a(s)}(P) 
    &= \BE_P[\BE_P(Y|A = a(s),W)] \\
    &\equiv \BE_P[\BE_P(Y|A_{j_1}=a_{j_1}(s_1), \ldots, A_{j_k}=a_{j_k}(s_k),W)].
\end{align}
Under additional causal assumptions $\Psi_{a(s)}(P)$ can be interpreted as the causal mean counterfactual outcome $\BE[Y(a(s))]$ under the joint treatment assignment $A = a(s)$ where $s \in \{0,1\}^k$~\citep{Rubin1974}.\\

Before proceeding with the analysis of this target parameter, we recall some notation.
As the parameter depends only on $P$ through $Q = Q(P) = (\bar{Q},Q_W)$, we sometimes write $\Psi(Q)$ instead of $\Psi(P)$.
We denote the true outcome regression of $Y$ on $(A,W)$ by $\bar{Q}_0$, the true covariate distribution by $Q_{W,0}$, and abbreviate both by $Q(P_0) = Q_0 = (\bar{Q}_0, Q_{W,0})$.
We write
\begin{equation}
    g(w) = g_P(w) = P(A = a(s) \mid W = w) = P(A_{j_1} = a_{j_1}(s_1), \ldots, A_{j_k} = a_{j_k}(s_k) \mid W=w)
\end{equation}
for the propensity score, where $s \in \{0,1\}^k$ is a binary vector of length $k$.
Throughout, we assume that the true propensity score $g_0$ satisfies the positivity condition $\delta < g_0(w) < 1-\delta$ for some $\delta > 0$, all covariates $w$ in the support of $Q_{W,0}$, and all treatment assignments.
Given a probability distribution $P \in \CM_0$ and any $P$-integrable function $f$, we write
\begin{equation}
    Pf = \BE_{P}[f(O)] = \int f(o) dP(o).
\end{equation}
We denote the empirical distribution on $n$ variables $O_1,\ldots,O_n$ by $\BP_n$.
In particular, 
\begin{equation}
    \BP_n f = \frac{1}{n} \sum_{i=1}^n f(O_i)
\end{equation}
denotes the sample average of $f$ with respect to $\BP_n$.

\subsection{Influence function and exact remainder}

The efficient influence function is a key object in semi-parametric efficient estimation, as it characterises the asymptotic behaviour of any regular and efficient estimator~\citep{bickel1998efficient}.
After recalling the necessary ingredients, we derive the efficient influence function (EIC) of our $k$-point interaction estimator of Eq.~\ref{eq:k-point} and analyse its robustness properties.
We use it to construct non-parametric efficient estimators of interaction via (i) one-step bias correction~\citep{PfanzaglWefelmeyer+1985+379+388}, and (ii) targeted minimum loss-based estimation (TMLE)~\citep{vanderLaanRubin+2006,MR2867111,MR3791826}.\\

Recall that a regular estimator $\hat{\psi}_n$ of $\psi_0 = \Psi(Q_0)$ is asymptotically linear if it can be written as $\hat{\psi}_n = \psi_0 + \BP_n D(P_0) + o_P(n^{-1/2})$, where $D(P_0)$ is a gradient of the statistical estimand $\Psi$ at $P_0$ with respect to a statistical model $\CM$.
The function $D(P)(O)$ is a gradient of $\Psi$ at $P$ relative to $\CM$ if
\begin{equation}
    \frac{d}{d\epsilon} \Psi(P_{\epsilon}) \Big|_{\epsilon = 0} = \int D(P)(o) s(o) dP(o),
\end{equation}
for any regular one-dimensional parametric submodel $\{P_{\epsilon}\} \subseteq \CM$ such that $P_{\epsilon = 0} = P$ with score function $s(o)$ at $\epsilon = 0$, and if $D(P) \in L^2_0(P)$.
Here, for each $P \in \mathcal{M}$, we denote by $L^2_0(P)$ the Hilbert space of real-valued functions of $O$ with zero mean, finite variance, defined on the support of $P$, and endowed with the covariance inner product given by $\langle f,g \rangle = \BE_{P}[f(O)g(O)]$ for $f,g \in L^2_0(P)$.
We denote the induced norm by $|\!|-|\!|_P$.
The tangent space $\CT_{\CM}(P) \subseteq L^2_0(P)$ of $\CM$ at $P$ is the $L^2_0(P)$-closure generated by scores of regular one-dimensional parametric models.
For the nonparametric model $\CM_0$ we have $\CT_{\CM}(P) = L^2_0(P)$ for every $P \in \CM_0$.
There exists a unique canonical gradient $D^*(P) \in \CT_{\CM}(P)$, which is referred to as the efficient influence function as its squared $L^2(P)$-norm is the generalized Cramer--Rao (CR) lower bound for estimating $\Psi(P)$ relative to $\CM$~\citep{bickel1998efficient}.
Under sampling from $P_0$, a regular asymptotically linear (RAL) estimator is efficient if and only if its influence function is the efficient influence function.
The efficient influence function of $\Psi_a(P)$ at $P$ relative to $\CM_0$ is
\begin{equation}\label{eq:gradient_point_treatment}
    D_a^*(P)(O) = D_a^*(Q,g)(O) = \frac{\ID\{A=a\}}{g(A,W)}\Bigl\{ Y-\bar{Q}(A,W) \Bigr\} + \bar{Q}(a,W) - \Psi_a(Q),
\end{equation}
where $O$ is distributed according to $ P \in \CM_0$~\citep{van2003unified}.\\

Next, we derive the efficient influence function of the $k$-point interaction parameter $\Psi^{(k)}_{a(0),a(1)}(P)$ of Eq.~\ref{eq:k-point} as introduced in~\citep{PhysRevE.102.053314}.
Since the gradient of a linear combination of target parameters is equal to the linear combination of their gradients, we find:
\begin{lemma}\label{lem:IC_k_point}
    The efficient influence function of the  $k$-point interaction target parameter $\Psi^{(k)}_{a(0),a(1)}(P)$ of Eq.~\ref{eq:k-point} at $P$ relative to the non-parametric model $\CM_0$ is equal to
    \begin{equation}\label{eq:IC_k_point}
        D^{*}_{a(0),a(1)}(P) = \sum_{s \in \{0,1\}^k} (-1)^{k - (s_1+\ldots+s_k)} D_{a(s)}^*(P)
    \end{equation}
    where the expression of $D^*_{a(s)}(P)$ is given in Eq.~\ref{eq:gradient_point_treatment}.
\end{lemma}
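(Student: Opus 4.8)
The plan is to leverage the principle, already flagged in the excerpt, that pathwise differentiability and the canonical-gradient construction are linear: if a statistical estimand is a finite linear combination, with constant coefficients, of pathwise differentiable estimands, then it is itself pathwise differentiable at $P$ relative to the same model, and its efficient influence function is the corresponding linear combination of the efficient influence functions. Since Eq.~\ref{eq:k-point} expresses $\Psi^{(k)}_{a(0),a(1)} = \sum_{s\in\{0,1\}^k} c_s\,\Psi_{a(s)}$ with fixed coefficients $c_s = (-1)^{k-(s_1+\cdots+s_k)}$ that do not depend on $P$, and since Eq.~\ref{eq:gradient_point_treatment} records that each treatment-specific mean $\Psi_{a(s)}$ has efficient influence function $D^*_{a(s)}(P)$ relative to $\CM_0$, the assertion Eq.~\ref{eq:IC_k_point} follows once this principle is made precise for the case at hand.

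Concretely, I would first fix an arbitrary regular one-dimensional parametric submodel $\{P_\epsilon\}\subseteq\CM_0$ through $P$ with score $u$ at $\epsilon=0$ and differentiate $\epsilon\mapsto\Psi^{(k)}_{a(0),a(1)}(P_\epsilon)$ at $\epsilon=0$. Because the sum over $s$ is finite and the $c_s$ are constants, linearity of differentiation along the submodel gives
\[
\frac{d}{d\epsilon}\Psi^{(k)}_{a(0),a(1)}(P_\epsilon)\Big|_{\epsilon=0} = \sum_{s\in\{0,1\}^k} c_s\,\frac{d}{d\epsilon}\Psi_{a(s)}(P_\epsilon)\Big|_{\epsilon=0} = \sum_{s\in\{0,1\}^k} c_s \int D^*_{a(s)}(P)\,u\,dP = \int\Bigl(\sum_{s\in\{0,1\}^k} c_s\,D^*_{a(s)}(P)\Bigr)u\,dP,
\]
so $\sum_{s} c_s\,D^*_{a(s)}(P)$ is a gradient of $\Psi^{(k)}_{a(0),a(1)}$ at $P$ relative to $\CM_0$. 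I would then verify that this candidate lies in the tangent space. Each $D^*_{a(s)}(P)$ belongs to $L^2_0(P)$: it has mean zero, since the inverse-propensity term has conditional expectation zero given $(A,W)$ and $\BE_P[\bar Q(a(s),W)] = \Psi_{a(s)}(Q)$ by definition, and it has finite variance by the standing positivity bound $\delta < g_0 < 1-\delta$ together with the moment hypotheses on $Y$. A finite linear combination of elements of $L^2_0(P)$ is again in $L^2_0(P)$, and for the nonparametric model $\CT_{\CM_0}(P) = L^2_0(P)$; hence the candidate lies in $\CT_{\CM_0}(P)$ and, by uniqueness of the canonical gradient inside the tangent space, it equals the efficient influence function $D^*_{a(0),a(1)}(P)$, which is exactly Eq.~\ref{eq:IC_k_point}.

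I do not anticipate a genuine obstacle here: the argument reduces to linearity of the directional derivative along parametric submodels plus closure of $L^2_0(P)$ under finite linear combinations. The only point that warrants an explicit line is the verification that each $D^*_{a(s)}(P)$ actually lies in $L^2_0(P)$, since this is where positivity and the moment assumptions on $Y$ enter; but this membership is already subsumed in the statement that Eq.~\ref{eq:gradient_point_treatment} is the efficient influence function of $\Psi_{a(s)}$, so one may alternatively just cite that fact together with the linearity principle and conclude immediately.
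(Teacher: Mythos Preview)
Your proposal is correct and follows essentially the same approach as the paper: the paper's argument is the single sentence ``Since the gradient of a linear combination of target parameters is equal to the linear combination of their gradients,'' and you have simply unpacked this linearity principle explicitly by differentiating along an arbitrary submodel and checking membership in $L^2_0(P) = \CT_{\CM_0}(P)$.
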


The gradient $D^*(P_0)$ provides a first-order approximation of the parameter $\Psi(P)$ around the true value $\Psi(P_0)$ at the data-generating distribution $P_0$.
More precisely, the Von Mises expansion reads
\begin{equation}\label{eq:von_mises}
    \Psi(P) - \Psi(P_0) = \bigl(P-P_0) D^*(P) + R(P,P_0)
\end{equation}
where this equation defines the second-order exact remainder $R(P,P_0) = \Psi(P) - \Psi(P_0) + P_0 D^*(P)$.
The second-order exact remainder of the parameter $\Psi_a(P)$ at $P$ relative to $\CM_0$ can be written as
\begin{equation}\label{eq:exact_remainder_point_treatment}
    R_a(P,P_0) = R_a\bigl(Q_a,Q_{a,0},g_a,g_{a,0}\bigr) = P_0\Bigl\{ \bigl(\bar{Q}_a-\bar{Q}_{a,0}\bigr) \bigl(g_a-g_{a,0}\bigr)/g_a\Bigr\}
\end{equation}
where we write $\bar{Q}_a(W) = \bar{Q}(a,W)$, $g_a(W) = P(A=a \mid W)$ and their analogues implied by $P_0$.
In particular, the remainder of this parameter is double robust by the Cauchy--Schwarz inequality:
\begin{equation}\label{eq:point_treatment_is_double_robust}
    R_a(P,P_0) \leq |\!|\bar{Q}_{a,0}-\bar{Q}_a|\!|_{P_0} \cdot |\!|(g_{a,0}-g_a)/g_a|\!|_{P_0}.
\end{equation}
Note that the exact second-order remainder is linear in the target parameter because the gradient is linear in the target parameter, and hence so is its defining equation in Eq.~\ref{eq:von_mises}.
We immediately obtain the second-order exact remainder of our $k$-point interaction parameter of Eq~\ref{eq:k-point}.
\begin{lemma}\label{lem:exact_remainder_k_point}
    The second-order exact remainder of the  $k$-point interaction target parameter $\Psi^{(k)}_{a(0),a(1)}(P)$ of Eq.~\ref{eq:k-point} at $P$ relative to the non-parametric model $\CM_0$ is equal to
    \begin{align}\label{eq:exact_remainder_k_point}
        R_{a(0),a(1)}(P,P_0) 
        = \sum_{s \in \{0,1\}^k} (-1)^{k - (s_1+\ldots+s_k)} R_{a(s)}(P,P_0)
    \end{align}
    where the expression of $R_{a(s)}(P,P_0)$ is given in Eq.~\ref{eq:exact_remainder_point_treatment}.
\end{lemma}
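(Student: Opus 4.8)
The plan is to exploit the fact that, exactly as for the efficient influence function in Lemma~\ref{lem:IC_k_point}, the second-order exact remainder depends \emph{linearly} on the target parameter. Recall from Eq.~\ref{eq:von_mises} that for any pathwise-differentiable parameter $\Phi$ on $\CM_0$ with canonical gradient $D^*_\Phi$, its exact remainder is defined by $R_\Phi(P,P_0) = \Phi(P) - \Phi(P_0) + P_0 D^*_\Phi(P)$. Each of the three summands on the right is linear in $\Phi$: the first two trivially, and the third because gradients of a linear combination add (the step already used to pass from Eq.~\ref{eq:gradient_point_treatment} to Eq.~\ref{eq:IC_k_point}) and because $f \mapsto P_0 f$ is a linear functional on $L^2_0(P_0)$. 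Hence the assignment $\Phi \mapsto R_\Phi(P,P_0)$ is linear.

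First I would state this linearity precisely: if $\Phi = \sum_i c_i \Phi_i$ is a finite linear combination of pathwise-differentiable parameters with canonical gradients $D^*_{\Phi_i}$, then $D^*_\Phi = \sum_i c_i D^*_{\Phi_i}$ and consequently $R_\Phi(P,P_0) = \sum_i c_i R_{\Phi_i}(P,P_0)$. Then I would specialise to $\Phi = \Psi^{(k)}_{a(0),a(1)}$, with summands $\Phi_s = \Psi_{a(s)}$ and coefficients $c_s = (-1)^{k-(s_1+\cdots+s_k)}$ indexed by $s \in \{0,1\}^k$, which is precisely the decomposition of Eq.~\ref{eq:k-point}. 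By Lemma~\ref{lem:IC_k_point} the canonical gradient of $\Psi^{(k)}_{a(0),a(1)}$ is the matching signed sum of the $D^*_{a(s)}(P)$, so the linearity just established yields Eq.~\ref{eq:exact_remainder_k_point} directly, with each $R_{a(s)}(P,P_0)$ given by Eq.~\ref{eq:exact_remainder_point_treatment}.

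There is essentially no genuine obstacle: the argument is bookkeeping with signs and is the exact analogue of the proof of Lemma~\ref{lem:IC_k_point}. The only point deserving care is that Eq.~\ref{eq:exact_remainder_point_treatment} correctly records the exact remainder of the single-treatment mean $\Psi_{a(s)}$. If one wished the lemma to be self-contained rather than to cite this, I would verify it by a short direct computation: substitute $D^*_a$ from Eq.~\ref{eq:gradient_point_treatment} into $\Psi_a(P) - \Psi_a(P_0) + P_0 D^*_a(P)$, evaluate $P_0\bigl[\ID\{A=a\}\{Y-\bar{Q}(A,W)\}/g(A,W)\bigr]$ by conditioning first on $(A,W)$ and then on $W$ to obtain $\BE_{P_0}\bigl[(g_{a,0}/g_a)(\bar{Q}_{a,0}-\bar{Q}_a)\bigr]$, use $\Psi_a(P_0) = \BE_{P_0}[\bar{Q}_{a,0}]$ and $P_0 \bar{Q}_a(W) = \BE_{P_0}[\bar{Q}_a]$, and collect terms to land on $P_0\bigl\{(\bar{Q}_a - \bar{Q}_{a,0})(g_a - g_{a,0})/g_a\bigr\}$. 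Since this identity is already stated in the excerpt, I would simply invoke it and conclude by the linearity argument above.
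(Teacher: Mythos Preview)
Your proposal is correct and matches the paper's approach exactly: the paper observes, in the sentence immediately preceding the lemma, that the exact second-order remainder is linear in the target parameter because the gradient is (via Eq.~\ref{eq:von_mises}), and concludes the lemma immediately from this linearity. Your write-up simply unpacks this same argument with a bit more detail; the optional verification of Eq.~\ref{eq:exact_remainder_point_treatment} is not needed since the paper already states it.
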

Henceforth, we leave the interaction order, as well as the initial and final treatment levels, implicit and write $\Psi(P)$ for the $k$-order interaction from levels $a(0)$ to $a(1)$ as defined in Eq.~\ref{eq:k-point}.
Let $\hat{P}_n$ be an estimator of $P_0$ based on the available data $\BP_n$, and denote by $\hat{\psi}_n \equiv \Psi(\hat{P}_n)$ the corresponding plug-in estimator.
Since $\Psi$ is path-wise differentiable, we inspect its Von Mises expansion:
\begin{equation}
    \Psi(\hat{P}_n) - \Psi(P_0) = \bigl(\hat{P}_n-P_0) D^*(\hat{P}_n) + R(\hat{P}_n,P_0).
\end{equation}
By adding and subtracting terms, this can be rewritten as
\begin{align}
    \Psi(\hat{P}_n) - \Psi(P_0) 
    &= \bigl(\BP_n-P_0) D^*(\hat{P}_n) - \BP_n D^*(\hat{P}_n) + R(\hat{P}_n,P_0) \\
    &= \bigl(\BP_n-P_0) D^*(\hat{P}) - \BP_n D^*(\hat{P}_n) + \bigl(\BP_n-P_0) \Bigl\{D^*(\hat{P}_n) - D^*(\hat{P}) \Bigr\} + R(\hat{P}_n,P_0)
\end{align}
where $\hat{P}$ denotes the in-probability limit of $\hat{P}_n$, and we have used $PD^*(P) = 0$ for any $P \in \CM_0$.
Let $\bar{Q}_n$ and $g_n$ be the corresponding components of $\hat{P}_n$, let $\bar{Q}$ and $g$ be their in-probability limits (components of $\hat{P}$), and let $\bar{Q}_0$ and $g_0$ be the components of $P_0$.
Following~\cite{10.1093/biomet/asx053}, the above equation is written more precisely as
\begin{equation}\label{eq:Benkeser}
    \Psi(Q_n) - \Psi(Q_0) =
    \bigl(\BP_n-P_0) D^*(Q,g)
    - B_n(Q_n,g_n)
    + M_n(Q_n,Q,g_n,g)
    + R(Q_n,Q_0,g_n,g_0)
\end{equation}
where the two additional terms are the first-order bias term,
\begin{align}
    B_n(Q_n,g_n)
    = \BP_n D^*(Q_n,g_n),
\end{align}
and an empirical process term,
\begin{align}
    M_n(Q_n,Q,g_n,g)
    &= \bigl(\BP_n-P_0) \Bigl\{D^*(Q_n,g_n) - D^*(Q,g) \Bigr\}.
\end{align}
The empirical process term is $o_P(n^{-1/2})$ under mild conditions, see~\cite{10.1093/biomet/asx053,van1996weak}.
The rate of the second-order exact remainder term is determined by how fast the nuisance functions $\bar{Q}_0$ and $g_0$ are estimated.
Due to its double robust structure by Eq.~\ref{eq:point_treatment_is_double_robust}, this terms is also $o_P(n^{-1/2})$ provided the product of their rates is $n^{-1/2}$.
Sufficient rates of $n^{-1/4}$ for estimating both $\bar{Q}_0$ and $g_0$ can be achieved by some algorithms, such as the Highly Adaptive Lasso (HAL)~\citep{7796956}.
The first term, $\bigl(\BP_n-P_0) D^*(Q,g)$, is the average of $n$ independent and identically distributed copies of the random variable $D^*(Q,g)(O) - P_0 D^*(Q,g)$.
It is an important fact that this random variable has mean zero if either $Q = Q_0$ or $g = g_0$ (or both).
The first-order bias term $B_n(Q_n,g_n)$ can be substantial and should be dealt with in order to obtain an asymptotically linear estimator that attains the non-parametric efficiency bound given by the variance of the EIF $D^*(Q,g)$.
For the $k$-order interaction parameter of Eq.~\ref{eq:k-point}, combining Eq.~\ref{eq:gradient_point_treatment} with Lemma~\ref{lem:exact_remainder_k_point}, the first-order bias term is
\begin{align}\label{eq:bias_k_point}
    B_n(Q_n,g_n) = \frac{1}{n} \sum_{i=1}^n \sum_{s \in \{0,1\}^k} (-1)^{k - (s_1+\ldots+s_k)} 
    &\Biggl[\frac{\ID\{a_i=a(s)\}}{g_n(a_i,w_i)}\Bigl\{ y_i-\bar{Q}_n(a_i,w_i) \Bigr\} \\
    &+ \bar{Q}_n(a(s),w_i) - \frac{1}{n}\sum_{j=1}^n \bar{Q}_n(a(s),w_j) \Biggr] \\
    = \sum_{s \in \{0,1\}^k} (-1)^{k - (s_1+\ldots+s_k)} \Biggl[\frac{1}{n} \sum_{i=1}^n &\frac{\ID\{a_i=a(s)\}}{g_n(a_i,w_i)}\Bigl\{ y_i-\bar{Q}_n(a_i,w_i) \Bigr\} \Biggr],
\end{align}
where $o_i = (y_i,a_i,w_i) = (y_i,a_{j_1,i}, \ldots, a_{j_k,i},w_i)$ is the $i$th observed data point.

\subsection{Semi-parametric efficient estimators}

We discuss two general strategies to deal with the first-order bias term $B_n(Q_n,g_n)$ and obtain an asymptotically linear estimator of the $k$-order interaction parameter, provided the rate conditions for the empirical process term and second-order exact remainder are met.
The empirical process term is of the required rate $o_P(n^{-1/2})$ when $D^*(Q_n,g_n)$ belongs to a $P_0$-Donsker class with probability tending to one and $P_0\bigl\{D^*(Q_n,g_n)-D^*(Q,g)\bigr\}^2$ converges to zero in probability~\citep{10.1093/biomet/asx053}, or by employing sample splitting.
Throughout, we refer to these as the \emph{canonical} and the \emph{cross-validated} (CV) approaches. \\

The first approach, introduced by~\cite{PfanzaglWefelmeyer+1985+379+388}, is an infinite-dimensional generalisation of the Newton-Raphson method known as the \emph{one-step estimator} (OSE).
Here, the first-order bias term is simply added to the plugin estimator, yielding
\begin{equation}\label{eq:one_step_estimator}
    \hat{\psi}_n^{+} \coloneqq \hat{\psi}_n + B_n(Q_n,g_n).
\end{equation}
Provided the rate conditions on $M_n$ and $R_n$ hold, Eq.~\ref{eq:Benkeser} reads
\begin{equation}
    \hat{\psi}^+_n - \psi_0 = \BP_n D^*(Q,g) + o_P(n^{-1/2})
\end{equation}
demonstrating that the OSE is asymptotically linear with variance equal to the non-parametric efficiency bound given by the variance of the EIF $D^*(Q,g)$ provided 
$Q=Q_0$ and $g=g_0$.
For a description of sampling behaviour when either $Q=Q_0$ or $g=g_0$, see Section~2.2 of~\cite{10.1093/biomet/asx053}.
For the average treatment effect, \ie, our target parameter with $k = 1$, the one-step estimator reduces to the Augmented Inverse Propensity Weighting (AIPW) estimator introduced by~\cite{AIPW94}.
While the one-step estimator is asymptotically efficient and straightforward to implement, its finite-sample performance can suffer since it is not a plug-in estimator and the first-order bias term may push the estimate outside of the target parameter's natural range.\\

The second approach, introduced by~\cite{vanderLaanRubin+2006}, instead updates the  fit $Q_n$ of $Q_0$ in an iterative procedure to a final fit $Q^{*}_n$ such that $B_n(Q^{*}_n,g_n) = 0$.
This updating step yields the Targeted Maximum-Likelihood, or Targeted Minimum Loss-based, Estimator (TMLE),
\begin{equation}\label{eq:TMLE}
    \hat{\psi}^{\textrm{tmle}}_n \coloneqq \Psi(Q^*_n).
\end{equation}
Similar to OSE, provided the rate conditions on $M_n$ and $R_n$ hold, Eq.~\ref{eq:Benkeser} reads
\begin{equation}\label{eq:TMLE_is_AL}
    \hat{\psi}^{\textrm{tmle}}_n - \psi_0 = \BP_n D^*(Q,g) + o_P(n^{-1/2})
\end{equation}
demonstrating that the TMLE is asymptotically linear with variance equal to the non-parametric efficiency bound given by the variance of the EIF $D^*(Q,g)$ when $Q=Q_0$ and $g=g_0$.
The sampling behaviour when either $Q=Q_0$ or $g=g_0$ is described in Section~2.2 of~\cite{10.1093/biomet/asx053}.
TMLE is a plugin estimator and hence enjoys finite-sample robustness properties~\citep{PorterGrubervanderLaanSekhon2011}.\\

The TMLE updating step consists in a linear regression (for continuous outcome $Y$) or logistic regression (for binary outcome $Y$) with offset, the initial fit $\bar{Q}_n(a,w)$, and covariate, the so-called clever covariate $H(g)(a,w)$ derived from the EIF $D^*(Q,g)$, namely
\begin{equation}\label{eq:fluctuations}
\begin{split}
    \bar{Q}_{n,\epsilon}(a,w) &= \bar{Q}_n(a,w) + \epsilon H(g)(a,w), \\
    \logit \bar{Q}_{n,\epsilon}(a,w) &= \logit \bar{Q}_n(a,w) + \epsilon H(g)(a,w)
\end{split}
\end{equation}
respectively.
The parameter $\epsilon$ is fitted by minimum loss-based estimation,
\begin{equation}\label{eq:epsilon_empirical_loss}
    \hat{\epsilon} = \arg\min_{\epsilon} \mathbb{P}_n \mathcal{L}\bigl\{\bar{Q}_{n,\epsilon}\bigr\},
\end{equation}
with respect to the squared loss function $\mathcal{L}\{f\}(O) = \bigl\{f(W,A)-Y\bigr\}^2$ for continuous outcome $Y$ or the log-loss function $\mathcal{L}\{f\}(O) = - Y \log f(W,A) - (1-Y)\log\bigl(1-f(A,W)\bigr)$ for binary outcome $Y$, where $f$ is a function of $(W,A)$.
The targeted update $\bar{Q}^*_n \equiv \bar{Q}_{n,\hat{\epsilon}}$ defines the TMLE of Eq.~\ref{eq:TMLE}, where we use the empirical distribution $\mathbb{Q}_W$ of $Q_W$ which is not updated as it is the nonparametric MLE.
These fluctuations, loss functions, and clever covariate are chosen so that the update $Q^*_n = (\bar{Q}^*_n, \mathbb{Q}_W)$ solves the EIF, \ie, it eliminates the first-order bias $B_n(Q^*_n,g_n) = 0$.
To see this, note
\begin{equation}
    \frac{d}{d\epsilon} \mathcal{L}\bigl\{\bar{Q}_{n,\epsilon}\bigr\}(O)\Big|_{\epsilon = 0} = H(g)(A,W)\bigl\{Y - \bar{Q}_n(A,W)\bigr\}.
\end{equation}
Since $\hat{\epsilon}$ is the minimiser of the empirical loss $\BP_n \CL\bigl\{\bar{Q}_{n,\epsilon}\bigr\}$ and $\bar{Q}^*_n \equiv \bar{Q}_{n,\hat{\epsilon}}$, it follows that
\begin{equation}
    0 
    = \frac{d}{d\epsilon} \mathbb{P}_n \mathcal{L}\bigl\{\bar{Q}^*_{n,\epsilon}\bigr\}\Big|_{\epsilon = 0}
    = \BP_n \Bigr\{ H(g_n)(A,W)\bigl\{Y - \bar{Q}^*_n(A,W)\bigr\} \Bigr\} \equiv B_n(Q^*_n,g_n)
\end{equation}
by Eq.~\ref{eq:bias_k_point}.
Here $\bar{Q}^*_{n,\epsilon}$ denotes the corresponding fluctuation of Eq.~\ref{eq:fluctuations} but with offset $\bar{Q}^*_n$ so that $\bar{Q}^*_{n,\epsilon = 0} \equiv \bar{Q}^*_n$.
Thus, for the $k$-point interaction $A \colon a(0) \to a(1)$ the clever covariate should be set to
\begin{equation}\label{eq:clever_covariate_int}
    H(g_n)(A,W) = \sum_{s \in \{0,1\}^k} (-1)^{k - (s_1+\ldots+s_k)} \frac{\ID\{A=a(s)\}}{g_n(A,W)},
\end{equation}
generalising the clever covariate for the average treatment effect, \ie, interaction with $k=1$.
This completely defines $\hat{\psi}_n^{\text{tmle}}$ of Eq.~\ref{eq:TMLE} and demonstrates it is asymptotically linear as per Eq.~\ref{eq:TMLE_is_AL}. \\

In finite samples, it has been observed in simulations~\citep{Sofrygin2017} that performance in terms of bias, variance, and coverage increases when fitting $\epsilon$ in Eq.~\ref{eq:fluctuations} via weighted regression, \ie, by minimising a weighted version of the empirical loss of Eq.~\ref{eq:epsilon_empirical_loss}.
Specifically, performance improves in the presence of near-positivity violations when the estimated propensity score $g_n(A,W)$ is close to zero.
This is particularly relevant in fields such as genetics, where DNA variants may be rare.
The weighted approach places the term $1/g_n(A,W)$ in the loss function,
\begin{equation}
    \mathcal{L}_{g_n}\bigl\{f\}(O) \equiv \frac{1}{g_n(A,W)}\mathcal{L}\bigl\{f\}(O),
\end{equation}
and removes it from the clever covariate.
The corresponding weighted fluctuations are
\begin{equation}\label{eq:fluctuations_weighted}
\begin{split}
    \bar{Q}^{(w)}_{n,\epsilon}(a,w) &= \bar{Q}_n(a,w) + \epsilon H'(a,w), \\
    \logit \bar{Q}^{(w)}_{n,\epsilon}(a,w) &= \logit \bar{Q}_n(a,w) + \epsilon H'(a,w)
\end{split}
\end{equation}
where the weighted clever covariate is
\begin{equation}\label{eq:clever_covariate_int_weighted}
    H'(A,W) = \sum_{s \in \{0,1\}^k} (-1)^{k - (s_1+\ldots+s_k)} \ID\{A=a(s)\}.
\end{equation}
The parameter $\hat{\epsilon}$ is now fitted by minimising the empirical weighted loss,
\begin{equation}\label{eq:epsilon_empirical_loss_weight}
    \hat{\epsilon} = \arg\min_{\epsilon} \mathbb{P}_n \mathcal{L}_{g_n}\bigl\{\bar{Q}_{n,\epsilon}\bigr\},
\end{equation}
yielding $\bar{Q}^*_n \equiv \bar{Q}^{(w)}_{n,\hat{\epsilon}}$ and $Q^*_n = (\bar{Q}^*_n,\mathbb{Q}_W)$.
The weighted TMLE (wTMLE) is finally defined as $\hat{\psi}_n^{\text{wtmle}} \equiv \Psi\bigl(Q^*_n\bigr)$.
For a proof that this choice of weighted fluctuations and weighted clever covariate also solves the EIF, \ie, satisfies $B_n(Q^*_n,g_n) = 0$, see Appendix~\ref{app:fluctuation}.
\\

Estimating the functions $(Q_n,g_n)$ with data-adaptive or ML algorithms whilst evaluating OSE and (w)TMLE on the same dataset may lead to decreased performance of the canonical estimators as it can affect the required rate conditions of the empirical process term $M_n$.
By using cross-validated (or sample-splitting) versions of these estimators, we can allow for complex algorithms to fit $(Q_n,g_n)$ on part of the data whilst maintaining performance of OSE and (w)TMLE by evaluating these estimators on a held-out part of the data, essentially treating the fits of $(Q_n,g_n)$ as fixed.\footnote{Specifically, using sample-splitting avoids the $P_0$-Donsker condition on algorithms used to fit $Q_n$ and $g_n$.}\\

More precisely, for $K$-fold sample splitting we split the data into $K \geq 2$ disjoint folds.
We write $k(i) \in \{1,2,\ldots,K\}$ for the fold to which sample $i$ belongs, and $-k(i)$ for the union of the $K-1$ remaining folds.
Similarly, we write $Q_n^{k}$ for an estimator of $Q_0 = (\bar{Q}_0,Q_W)$ using the samples in fold $k$ only, and we write $Q_n^{-k}$ for an estimator of $Q_0$ using the samples in all folds but $k$.
We use this notation to define the cross-validated versions of OSE and (w)TMLE.
The CV-OSE is
\begin{equation}\label{eq:one_step_estimator_CV}
    \hat{\psi}_n^{\cv, +} \coloneqq \sum_{k=1}^K \frac{n_k}{n} \Bigl\{ \Psi\left( \bar{Q}_n^{-k}, \mathbb{Q}_W^k \right) + \BP^k_n D^*\left(Q_n^{-k},g_n^{-k}\right) \Bigr\},
\end{equation}
where $n_k = |\{i \colon k(i) = k\}|$ denotes the number of samples in the $i$th fold, and $\BP_n^k$ denotes the empirical distribution of the $k$th fold.
Importantly, for the $K$ estimators in the sum, averages are taken over fold $k$ (via $\mathbb{Q}_W^k$ and $\BP_n^k$) whereas $Q_n^{-k}$ and $g_n^{-k}$ are estimated on all other folds.\\

For the CV-TMLE, the parameter $\hat{\epsilon}$ is defined in a pooled manner by the objective
\begin{equation}\label{eq:epsilon_empirical_loss_CV}
    \hat{\epsilon} 
    = \arg\min_{\epsilon} \sum_{k=1}^K \mathbb{P}^k_n \mathcal{L}\bigl\{\bar{Q}^{-k}_{n,\epsilon}\bigr\}
    = \arg\min_{\epsilon} \sum_{k=1}^K \sum_{i:k(i) = k} \mathcal{L}\bigl\{\bar{Q}^{-k}_{n,\epsilon}\bigr\}\bigl(O_i\bigr)
\end{equation}
where $\bar{Q}^{-k}_{n,\epsilon}$ is the respective path estimated on the folds in $-k$ and $\mathcal{L}$ is the respective loss function for continuous and binary outcomes.
Note that the loss of a sample $i$ in fold $k(i) = k$ is computed relative to nuisance functions fitted on all other folds $-k$.
The targeted update step now satisfies
\begin{equation}\label{eq:TMLE_update_CV}
    \bar{Q}^{*}_n(O_i) = \bar{Q}^{-k(i)}_{n,\hat{\epsilon}}(O_i)
\end{equation}
for all samples $i = 1, 2, \ldots, n$, and depends on the pooled estimate $\hat{\epsilon}$.
The CV-TMLE is defined as
\begin{equation}\label{eq:TMLE_CV}
    \hat{\psi}^{\cv,\textrm{tmle}}_n \coloneqq \sum_{k=1}^K \frac{n_k}{n} \Psi\left(\bar{Q}^*_n, \mathbb{Q}^k_W\right).
\end{equation}
While no longer a substitution estimator, the CV-TMLE respects the natural bounds of the target parameter because it is an average of substitution estimators.
The weighted CV-TMLE, which we denote by $\hat{\psi}_n^{\cv,\text{wtmle}}$, is defined analogously by using the weighted fluctuation and loss function.


\subsection{Inference and hypothesis testing}
We obtain asymptotic Wald-type confidence intervals and hypothesis tests for interaction using the asymptotic normality of the interaction estimators constructed in the previous section.
If $\hat{\psi}_n$ is any of these estimators, with estimated nuisance functions $(Q_n,g_n)$, then
\begin{equation}
    \sqrt{n} \bigl(\hat{\psi}_n - \psi_0 \bigr) = \sqrt{n} \,\BP_n D^*(Q,g) + o_P(1) \rightsquigarrow \mathcal{N}\bigl(0, \Var D^*(Q,g)\bigr)
\end{equation}
by the Central Limit Theorem.
Here, recall that $(Q, g)$ are the in-probability limits of $(Q_n, g_n)$ respectively, and we assume either $Q = Q_0$ or $g = g_0$ (or both).
In practice, for the canonical estimators of Eqs.~\ref{eq:one_step_estimator} and~\ref{eq:TMLE}, we use the sample variance estimator $\hat{\sigma}_n^2$ built from $(Q_n,g_n)$ to approximate the variance of the EIF, namely
\begin{equation}\label{eq:variance_iid}
    \hat{\sigma}_n^2 = \BP_n \bigl\{D^*(Q_n,g_n)\bigr\}^2 = \frac{1}{n} \sum_{i=1}^n \Bigl\{D^*(Q_n,g_n)(o_i)\Bigr\}^2.
\end{equation}
For the cross-validated versions of the OSE and (weighted) TMLE of Eq.~\ref{eq:one_step_estimator_CV} and~\ref{eq:TMLE_CV}, we approximate the variance of the EIF with cross-validated sample variance estimators $\hat{\sigma}_{\text{cv},n}^2$ built from the nuisance functions $(Q^{-k}_n,g^{-k}_n)$ estimated on all folds but $k$ for $k = 1, \ldots, K$, namely
\begin{equation}\label{eq:cv_variance_iid}
   \hat{\sigma}_{\text{cv},n}^2 = \sum_{k=1}^K \frac{n_k}{n} \BP^k_n \bigl\{ D^*\left(Q_n^{-k},g_n^{-k}\right) \bigr\}^2 = \frac{1}{n} \sum_{k=1}^K \sum_{i : k(i) = k} \Bigl\{ D^*\left(Q_n^{-k},g_n^{-k}\right)(o_i) \Bigr\}^2.
\end{equation} 
Both $\hat{\sigma}_n^2$ and $\hat{\sigma}_{\text{cv},n}^2$ are consistent estimators of $\Var D^*(Q_0,g_0)$ if the rate conditions on $M_n$ and $R_n$ hold.
We obtain asymptotically valid $(1-\alpha) \times 100\%$ Wald-type confidence intervals of the form
\begin{equation}\label{eq:CI_wald}
    \widehat{\ConfI}_n = \left( \hat{\psi}_n - z_{1-\alpha/2}\frac{\hat{\sigma}_n}{\sqrt{n}}, \hat{\psi}_n + z_{1-\alpha/2}\frac{\hat{\sigma}_n}{\sqrt{n}} \right),
\end{equation}
where $z_{\beta}$ denotes the $\beta$-quantile of the standard normal distribution.
Similarly, under the null hypothesis of no interaction, $H_0 \colon \psi_0 = 0$, we can use $\sqrt{n} \hat{\psi}_n / \hat{\sigma}_n \sim \CN(0,1)$ to test for interaction.\\

In a typical GWAS, the genetic effect size of a DNA variant on phenotype is estimated as a coefficient in a Linear Mixed Model (LMM).
To bound false positives due to multiple testing across many variants, either a Bonferroni correction, bounding the Family-Wise Error Rate (FWER), or a Benjamini--Hochberg correction, bounding the False Discovery Rate (FDR), is employed.
Due to the linearity assumption of the LMM, this means that each DNA variant contributes a single test to the multiple testing procedure.
In contrast, our first-order genetic effect size of Eq.~\ref{eq:k-point} for $k = 1$, estimates both allelic effects separately, yielding twice as many tests.
For a $k$-order interaction, this constitutes $2^k$ times as many tests\footnote{In the literature, a baseline cut-off is typically placed on the positivity of the propensity score or, in genetics, on the minor allele frequency of a DNA variant (independent of covariate strata).
Instead, we are conservative and simply do not test genotype changes for which the positivity threshold is not met. This means that there are at most $2^k$ times as many tests in our procedure, but in practice often fewer.}, increasing the testing burden.\\

It is often of interest to understand whether a DNA variant at a locus, or a group of $k$ DNA variants in an interaction, are significantly associated with a trait regardless of the specific genotype at the locus or loci.
However, joint testing is required because shared use of the heterozygote genotype at a locus makes the statistical tests dependent.
To construct asymptotically valid confidence intervals and hypothesis tests, we appeal to multi-variate normality of our semi-parametric estimators and use Hotelling's T-squared statistic $(T^2$) as multi-estimand generalisation of the one-dimensional t-statistic.
More precisely, let $\psi = (\psi_1,\ldots,\psi_p)$ be a $p$-dimensional estimand with canonical gradient
\begin{equation}
    D^*_{\psi} = (D^*_{\psi_1},\ldots,D^*_{\psi_p}).
\end{equation}
Let $\hat{\psi}_n = (\hat{\psi}_{1,n}, \ldots, \hat{\psi}_{p,n})$ be a vector of asymptotically linear estimators with EIF $D^*_{\psi}$, then
\begin{equation}
    \sqrt{n} \bigl(\hat{\psi}_n - \psi_0 \bigr) = \sqrt{n} \, \BP_n D^*(Q,g) + o_P(1) \rightsquigarrow \mathcal{N}\bigl(0, \Var D^*_{\psi}(Q,g)\bigr)
\end{equation}
by the multi-variate Central Limit Theorem.
Here $\Sigma \equiv \Var D^*_{\psi}(Q,g)$ denotes the $p \times p$-dimensional covariance matrix of $D^*_{\psi}(Q,g)$ with in the $jk$ entry the element
\begin{equation}
    \Sigma_{jk} = \Cov\bigl(D^*_{\psi_j},D^*_{\psi_k}\bigr).
\end{equation}
In practice, we use the sample covariance matrix $\hat{\Sigma}_n$ built from the estimated functions $(Q_n,g_n)$:
\begin{equation}
    \bigl(\hat{\Sigma}_n\bigr)_{jk} 
    = \BP_n \bigl\{D^*_{\psi_j}(Q_n,g_n)D^*_{\psi_k}(Q_n,g_n)\bigr\}
    = \frac{1}{n} \sum_{i=1}^n \Bigl\{ D^*_{\psi_j}(Q_n,g_n)(o_i) \cdot D^*_{\psi_k}(Q_n,g_n)(o_i) \Bigr\}.
\end{equation}
This is a consistent estimator of $\Sigma$ provided the rate conditions on $M_n$ and $R_n$ hold for all $\hat{\psi}_{i,n}$.
To jointly test if any of the parameters are significant, we use Hotelling's T-squared statistic defined as
\begin{equation}
    t^2_n = n\bigl(\hat{\psi}_n - \psi_0\bigr) \hat{\Sigma}^{-1}_n \bigl(\hat{\psi}_n - \psi_0\bigr)^T.
\end{equation}
It follows a scaled $F$-distribution, namely $t^2_n \sim \frac{p(n-1)}{n-p} F_{p,n-p}$.
This can be used to construct Wald-type confidence regions, obtain asymptotically valid $p$-values, and perform hypothesis tests against the joint null $H_0 \colon \psi_0 = 0$ that all $p$ one-dimensional parameters are zero simultaneously.\\

As an example, suppose $k=1$ and the treatment variable $A \equiv A_{j_1}$ has three treatment levels denoted by $\{0, 1, 2\}$.
In genomics, $\{0,1,2\}$ could represent the genotypes $\{TT, TC, CC\}$ at a biallelic locus of the genome, where the treatment level corresponds to the number of copies of the minor allele $C$.
Using the definition of $k$-order interaction in Eq.~\ref{eq:k-point}, there are three distinct target parameters, namely $\psi_1 \equiv \Psi_{0 \to 1}$, $\psi_2 \equiv \Psi_{1 \to 2}$, $\Psi_{0 \to 2}$ corresponding to the changes $a \colon 0 \to 1$, $a \colon 1 \to 2$, and $a \colon 0 \to 2$ respectively.
Consider the two-dimensional parameter $\psi = (\psi_1, \psi_2)$.
We seek to understand whether the addition of a single minor allele $(0 \to 1$ or $1 \to 2$) has a significant effect on outcome $Y$, correcting for covariates $W$.
The Hotelling statistic is $t^2_n = n (\hat{\psi}_{1,n}, \hat{\psi}_{2,n}) \hat{\Sigma}^{-1}_n (\hat{\psi}_{1,n}, \hat{\psi}_{2,n})^T$.
Its corresponding confidence regions are ellipses with dimensions and orientation controlled by the constituent signal-to-noise ratios and the correlation of $\hat{\psi}_{1,n}$ and $\hat{\psi}_{2,n}$.\\

More generally, suppose we have an asymptotically linear estimator $\hat{\psi}_n$ of our p-dimensional estimand $\psi = (\psi_1,\ldots,\psi_p)$, and we are interested in an estimate of $f(\psi)$ for some differentiable function $f \colon \BR^p \to \BR^q$.
Since $\sqrt{n} \bigl(\hat{\psi}_n - \psi_0\bigr) \rightsquigarrow \mathcal{N}(0,\Sigma)$, the functional delta method~\citep{van2000asymptotic} states that $f(\hat{\psi}_n)$ is also an asymptotically linear estimator of $f(\psi_0)$, and that
\begin{equation}\label{eq:delta_method}
    \sqrt{n} \Bigl(f(\hat{\psi}_n) - f(\psi_0) \Bigr)
    \rightsquigarrow \mathcal{N}\Bigl(0, \bigl(\nabla f(\psi_0)\bigr) \Sigma \bigl(\nabla f(\psi_0)\bigr)^T\Bigr). 
\end{equation}
Here $\nabla f(\psi_0) \colon \BR^q \to \BR^p$ denotes the (transposed) Jacobian matrix of first-order partial derivatives.
An important example in genetics is the non-linear allelic effect estimand $\psi_{\Delta} = \psi_2-\psi_1$ contrasting the population effect of adding a second minor allele ($1 \to 2$) versus adding the first minor allele ($0 \to 1$).
This estimator is obtained by applying the function $f \colon \BR^2 \to \BR$, $f(x_1,x_2) = x_2-x_1$ to the estimand $\psi = (\psi_1,\psi_2)$ in the example above.
The delta method of Eq.~\ref{eq:delta_method} yields
\begin{equation}\label{eq:nonlinear_allelic_difference}
    \sqrt{n} \bigl( \hat{\psi}_{\Delta,n} - \psi_{\Delta,0} \bigr)
    \rightsquigarrow \mathcal{N}\Bigl(0, \Var D^*_{\psi_1} - 2\Cov(D^*_{\psi_1},D^*_{\psi_2}) + \Var D^*_{\psi_2}\Bigr), 
\end{equation}
incorporating the dependence of $\hat{\psi}_{1,n}$ and $\hat{\psi}_{2,n}$ via the covariance of their influence functions.
The non-linear allelic effect estimator $\hat{\psi}_{\Delta}$ is directly available in the \texttt{TMLE.jl} package and \texttt{TarGene} software.
Automatic differentiation functionalities in \texttt{Julia} allow the user to specify their desired $p$-dimensional estimand and differentiable function $f$.


\subsection{Sieve plateau variance estimators}
\label{sec:svp}

Biobank cohorts consist of participants who are, to some extent, related due to ancestry or kinship.
TarGene accounts for this population dependence structure by appropriately adjusting variance estimates of effect sizes and interactions via Sieve Plateau (SP) variance estimators~\citep{SievePlateau}.
These estimators are based on a notion of genetic similarity between participants as encoded in the Genetic Relationship Matrix (GRM).
Population-dependence induces dependence among the variables $O_i$, and thus the influence functions of the target parameter of interest via genetic similarity.
Such genetic similarity can occur on a subpopulation level due to ancestry (\eg, being white Irish), or on an individual level due to kinship (\eg, parents, children, cousins).
Moreover, genetically similar individuals may share diet and environment inducing further dependence~\citep{GxE_GeographicCorrelations}.\\

The genetic similarity of two individuals $i$ and $j$ is quantified by the sample correlation coefficient $G_{ij}$ between their (centred and scaled) DNA variants.
Together, these coefficients form the GRM, denoted $G$, of size $N \times N$ where $N$ is the number of individuals in the population~\citep{henderson1975use}. 
More precisely, given a set of $R$ variants, we have
\begin{equation}\label{eq:GRM}
    G_{ij} = \frac{1}{R-1} \sum_{k = 1}^{R} \frac{(s_{ik} - 2p_k)(s_{jk} - 2p_k)}{2p_k(1-p_k)}.
\end{equation}
Here $s_{ik} \in \{0,1,2\}$ denotes the number of copies of the reference allele for individual $i$ at variant $k$, and $p_k \in (0,1)$ denotes the frequency of the reference allele at variant $k$ over the population of $N$ individuals.
In particular, the population average of $s_{ik}$ equals twice the reference allele frequency at variant k, \ie, $2p_k$ (one for each strand copy), so
\begin{equation}
    \frac{1}{N}\sum_{i = 1} ^{N} s_{ik} = 2p_k.
\end{equation}
Therefore, $\tilde{s}_{ik} = s_{ik} - 2p_{k}$ is the zero-centred count of the number of copies of the reference allele of individual $i$ at variant k.
Considered as a random variable, $\tilde{s}_{ik}$ can take on three values. 
Assuming reference alleles are sampled binomially with mean frequency $p_k$ (\ie, approximately satisfy Hardy--Weinberg equilibrium), the standard deviation of $\tilde{s}_{ik}$ equals $\sqrt{2p_k(1-p_k)}$.
This explains the additional factor in Eq.~\ref{eq:GRM} that scales the variables $\tilde{s}_{ik}$ and $\tilde{s}_{jk}$ so as to have unit variance.
Finally, note that the GRM depends on the set of $R$ selected variants.
These variants are chosen among genotyped (not imputed) variants which, in addition, are pruned not to be in LD with one another. \\

In TarGene, we neither assume individuals are independent nor impose the strong modelling assumptions placed on dependence by an LMM.
Instead, we incorporate the genetic dependence of individuals model-independently using an approach drawn from~\citep{SievePlateau}.
This approach generalises Eq.~\ref{eq:variance_iid} for the variance of the $k$-order interaction parameter of Eq.~\ref{eq:k-point} by constructing SP variance estimators that incorporate genetic dependence of individuals.\\

We now illustrate how data dependence impacts the variance estimate of Eq.~\ref{eq:variance_iid}.
Since individuals $i$ and $j$ are generally dependent, their data $O_i = (W_i,A_{1,i}, \ldots, A_{m,i}, Y_i)$ and $O_j$, as well as their corresponding influence curves $D^{*}_{P}(O_i)$ and $D^{*}_{P}(O_j)$, are also in general dependent.
The problem arises since for two random variables $X_1$ and $X_2$ the variance of their sum is \emph{not} in general equal to the sum of their variances.
The difference is exactly twice the covariance of $X_1$ and $X_2$,
\begin{equation}
    \Var(X_1\pm X_2) = \Var(X_1) + \Var(X_2) \pm 2\Cov(X_1,X_2).
\end{equation}
The impact of this difference may be large depending on the size of the covariance $\Cov(X_1,X_2)$; note that the covariance of two random variables vanishes when they are independent.
Thus, rather than Eq.~\ref{eq:variance_iid}, the true variance of the OSE or (w)TMLE semi-parametric estimators $\hat{\psi}_n$ is given by
\begin{equation}\label{eq:TL_effect_size_variance_dependent}
    \sigma_n^2 
    = \Var\bigl[ \BP_n D^{*}(Q,g) \bigr]
    = \frac{1}{n} \sum_{i=1}^{n} \sum_{j=1}^n \Cov\Bigl( D^{*}(Q,g)(O_i), D^{*}(Q,g)(O_j) \Bigr).
\end{equation}
As before, in practice we use the sample variance estimator $\hat{\sigma}_n^2$ built from the estimated functions $(Q_n,g_n)$ to approximate the covariance of the EIF.
The distinction between Eq.~\ref{eq:variance_iid} and Eq.~\ref{eq:TL_effect_size_variance_dependent} is only relevant for sufficiently genetically similar individuals.
SP variance estimators use a cut-off $\tau$ for the genetic similarity between individuals, and set the covariance to zero if individuals are sufficiently genetically dissimilar.
This produces a family of variance estimates, $\hat{\sigma}_{n}^2(\tau)$.
Under appropriately weak dependence (see~\cite[Theorem~1]{SievePlateau}), the true variance is obtained where the function $\tau \mapsto \hat{\sigma}_n^2(\tau)$ plateaus.\\

To construct SP variance estimators, we proceed as follows:
\begin{enumerate}
    \item Using the GRM, we define genetic dissimilarity between individuals $i$ and $j$ as
    \begin{equation}\label{eq:genetic_distance}
        d(i,j) = 1 - G_{ij},
    \end{equation}
    where $G_{ij}$ is the sample correlation coefficient of Eq.~\ref{eq:GRM} quantifying the genetic dependence between individuals $i$ and $j$.
    Since correlation is bounded, $|G_{ij}| \leq 1$, the genetic dissimilarity is non-negative and never larger than two, \ie, $0 \leq d(i,j) \leq 2$.
    Biologically, if two individuals $i$ and $j$ have identical DNA variants, they are fully correlated, $G_{ij} = +1$, and thus have zero genetic distance, $d(i,j) = 0$, as expected.
    \item Given a value for the cut-off $\tau \in [0,1]$, we define a SP variance estimator as
    \begin{equation}\label{eq:SP_variance_estimator}
        \hat{\sigma}^2_n(\tau) = \frac{1}{n} \sum_{i=1}^{n} \sum_{j=1}^{n} \ID\{ d(i,j) \leq \tau\} \cdot D^{*}(Q_n,g_n)(o_i) D^{*}(Q_n,g_n)(o_j).
    \end{equation}
    Here, the term $\ID\{ d(i,j) \leq \tau\}$ equals $1$ if the genetic dissimilarity between individuals $i$ and $j$ is at most $\tau$, \ie, $d(i,j) \leq \tau$, and it equals $0$ otherwise.
\end{enumerate}
The biological interpretation of these estimators is as follows.
The correlation between the influence functions of individuals $i$ and $j$, estimated by the term $D^{*}(Q_n,g_n)(o_i) D^{*}(Q_n,g_n)(o_j)$, is taken into account only if the genetic dissimilarity between individuals $i$ and $j$ is at most $\tau$.
Thus, the SP variance estimator $\hat{\sigma}^2_n(0)$ (for $\tau = 0$) assumes all individuals are \emph{independent}.
By increasing $\tau$, we first take the covariance between strongly genetically dependent individuals into account for low $\tau$, and then incorporate the covariance of more weakly dependent individuals as $\tau$ increases to $\tau = 1$.
\begin{enumerate}
    \item[(3)] We construct the variance estimator $\hat{\sigma}^2_n(\tau)$ for a number of values of the cut-off $\tau$, \eg, $\tau = 0,\frac{1}{k}, \frac{2}{k}, \ldots, \frac{k-1}{k}, 1$.
    Then we fit the function $\tau \mapsto \hat{\sigma}^2_n(\tau)$ and select the value of $\tau$ where the function plateaus, call it $\tau_0$.
    \item[(4)] The selected Sieve Plateau variance estimate is $\hat{\sigma}^2_n(\tau_0)$.
\end{enumerate}
Under mild assumptions \cite[Theorem~1]{SievePlateau}, the distribution of the effect size estimate $\hat{\psi}_n$ of variant $A$ on outcome $Y$ (or the effect of a $k$-order interaction on outcome) is asymptotically normal, and the SP variance estimator allows for the construction of an approximate $(1-\alpha) \times 100\%$ Wald-type confidence interval for the estimate $\hat{\psi}_n$ in the usual way, namely
\begin{equation}\label{eq:TL_confidence_interval_SP}
    \widehat{\ConfI}_n = \left( \hat{\psi}_n - z_{1-\alpha} \sqrt{\frac{\hat{\sigma}^2_n(\tau_0)}{n}}, \hat{\psi}_n + z_{1-\alpha} \sqrt{\frac{\hat{\sigma}^2_n(\tau_0)}{n}} \right).
\end{equation}
From here, we obtain realistic p-values correctly accounting for population dependence.\\

Similarly, we need to take into account population dependence in order to obtain a realistic estimate of the variance on $\hat{\psi}_{\Delta,n} = \hat{\psi}_{2,n} - \hat{\psi}_{1,n}$ in Eq.~\ref{eq:nonlinear_allelic_difference}.
If this difference of effect sizes of a DNA variant $A$ on outcome $Y$ is significant, then the effect of an additional allelic copy is non-linear.
We construct an SP estimator for the variance on this difference by following steps (1)--(4) above, with the exception of appropriately generalising Eq.~\ref{eq:SP_variance_estimator} in step (2) as follows.
Given a value for the cut-off $\tau \in [0,1]$, the estimator is
\begin{equation}\label{eq:SP_variance_estimator_difference}
\begin{split}
    \hat{\delta}^2_n(\tau) = \frac{1}{n} \sum_{i=1}^{n} \sum_{j=1}^{n} \ID\{ d(i,j) \leq \tau\} \Bigl[ &\hat{D}^{*}_{\psi_1}(o_i) \hat{D}^{*}_{\psi_1}(o_j) - \hat{D}^{*}_{\psi_1}(o_i) \hat{D}^{*}_{\psi_2}(o_j) \\ 
    -& \hat{D}^{*}_{\psi_2}(o_i) \hat{D}^{*}_{\psi_1}(o_j) + \hat{D}^{*}_{\psi_2}(o_i) \hat{D}^{*}_{\psi_2}(o_j) \Bigr].
\end{split}
\end{equation}
Here we have used the short-hand $\hat{D}^{*}_{\psi_i} = D^{*}_{\psi_i}(Q_n,g_n)$ for the influence function of estimand $\psi_i$ evaluated at the estimated nuisance functions $(Q_n,g_n)$ for the OSE $\hat{\psi}_n^{+}$ or the final estimated nuisance functions (here denoted similarly by $(Q_n,g_n)$) for the (weighted) TMLE $\hat{\psi}_n^{\text{(w)tmle}}$.

\section{Simulations}\label{sec:Simulations}

While semi-parametric estimators are theoretically asymptotically optimal, asymptotic regimes are not necessarily achieved in practice, even for large sample sizes because events may be rare.
This is particularly prevalent in population genetics, where some genetic variants and traits are found in less than $1\%$ of individuals.
In the absence of finite sample guarantees, simulation studies provide an effective way to validate statistical methods.\\

In population genetics, these simulations are often based on simple parametric models for which ground truth values can be obtained via algebraic formulas.
However, it has been recognised that these simulations lack important features of the true generating process~\citep{schuler2017synth, parikh2022validating, li2022evaluating}.
As a result, conclusions obtained through simple parametric simulations may not generalise well to real-world data.
More realistic simulations can be constructed by modelling the generating process with flexible data-adaptive generative models.
While this approach does not yield a closed form expression of the ground truth, an arbitrarily precise estimate can be obtained via Monte Carlo sampling.\\

Here we use data-adaptive generative models to analyse the performance of semi-parametric estimators through two types of simulations.
The first simulation examines conditions under which these estimators appropriately control type I error rates when the null hypothesis of no effect is true. This is referred to as the \emph{null simulation}.
In the second simulation, we model the data-generating process using flexible neural network-based models.
We then verify that the conditions identified in the first simulation provide nominal coverage and estimate the power of discovery. This is referred to as the \emph{realistic simulation}.
Both simulations utilise the entire UK Biobank dataset, rather than a homogeneous subsample (\eg, white British), to retain as many structural dependencies as possible.

\subsection{Null Simulation}

In the null simulation, we analyse the behaviour of the estimators when the null hypothesis of no variant effect on trait is true, \ie, the true effect size is zero.
The rationale behind this simulation is that most genetic variants are believed to have no or little effect~\citep{rands20148,Omnigenic_model}, and it is thus of particular importance that the type I error rate be controlled appropriately.
Since we are interested in the effects of genetic variants on traits, the data generating process must satisfy $Y \indep V_j$ for all $j = 1, \ldots, p$ variants, where $Y$ is a given trait and $V_j$ denote the treatment variables called $A_j$ in the previous sections.
In practice, we enforce a stronger condition, where all variables are pairwise independent.
This is done by independently drawing $n$ samples with replacement from the empirical marginal distribution of each variable.
The only exception is that PCs and additional covariates $C$ are sampled jointly.
This generating process, presented in Fig.~\ref{fig:samplers} (left), hence preserves many characteristics of the original dataset, while resulting in the true null hypothesis of no genetic effect on outcome.

\begin{figure}[H]
    \centering
    \includegraphics[scale=0.6]{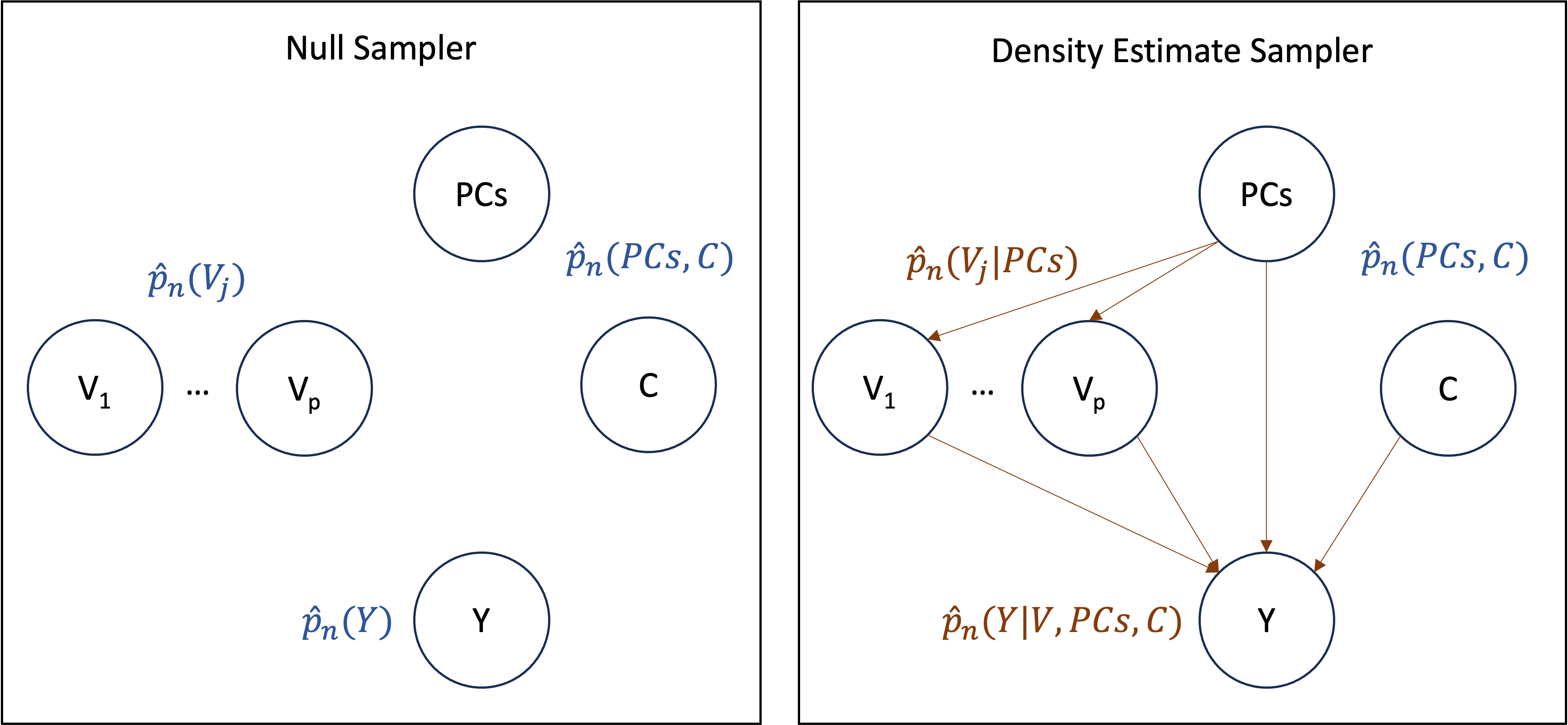}
    \caption{\textbf{Generating processes of simulation studies.} Empirical marginal distributions are coloured in blue while learnt conditional densities are coloured in orange. In both cases $(PCs, C)$ are sampled jointly using the empirical marginal distribution.
    \textbf{Left:} The null sampler independently samples from the empirical marginal distributions of each $Y$, $V_j$. This results in the theoretical null hypothesis of no effect.
    \textbf{Right:} The density estimate sampler proceeds via ancestral sampling, first each $V_j$ is sampled from $\hat{P}_n(V_j|PCs)$, then $Y$ is sampled from $\hat{P}_n(Y|\mathbf{V}, PCs, C)$. The various causal effects can then be approximated via Monte Carlo sampling using $\hat{P}_n\bigl(Y|do(\mathbf{V}), PCs, C\bigr)$.}
    \label{fig:samplers}
\end{figure}

\subsection{Realistic Simulation}

In the realistic simulation, we exploit flexible conditional density estimators to fit an estimand-specific generating process illustrated in Fig.~\ref{fig:samplers} (right).
For example, the interaction of $(V_1, V_2)$ on $Y$ requires three density estimates, namely $\hat{P}_n(V_1|PCs)$, $\hat{P}_n(V_2|PCs)$ and $\hat{P}_n(Y|V_1, V_2, PCs, C)$.
Similarly, the single variant effect of $V_1$ on $Y$ requires $\hat{P}_n(V_1|PCs)$ and $\hat{P}_n(Y|V_1, PCs, C)$.
Once these conditional densities have been estimated, new data can be generated via ancestral sampling.
Throughout, the empirical distribution $\hat{P}_n(PCs, C)$ is used to sample from the root nodes.\\

For the simulation to be realistic, the conditional density estimators must be able to capture the complexity of the data, which has two main implications.
The first is that the causal model should include as many causal variables as possible to generate the corresponding descendent variables.
The second is that the density estimators must be flexible and data-adaptive to capture sufficiently complex data generating processes.
We next discuss in turn how we address both challenges.

\subsubsection{Variable Selection}

For each genetic variant $V_j$, we assume the set of parent variables consists of the first six principal components derived from the genotyping data.
While limitations of PCA have been pointed out~\citep{elhaik2022principal}, it is still standard practice to use PCA to adjust for genetic ancestry~\citep{price2006principal, mbatchou2021computationally, yang2011gcta, loh2015efficient}.\\

In principle, variation across the entire genome as well as environmental variables could be causal of a given trait's variation.
To restrict the dimensionality of the problem we only consider a small subset of these putative causes.
Environmental variables are kept to the standard covariates, namely, age at assessment and genetic sex.
To include potential causal variants from the genome, we use published GWAS results from GeneATLAS~\citep{canela2018atlas}.
More precisely, we select a maximum of 50 variants from all variants associated with the outcome of interest ($\text{p-value} < 10^{-5}$).
Furthermore, we require these variants to (i) be at least one million base pairs away from each other to avoid linkage disequilibrium (LD), and (ii) have a minor allele frequency of at least $0.01$.
In Fig.~\ref{fig:samplers}, these selected variants and environmental variables are contained within the $C$ variable.
The fact that selected variants are not generated from principal components greatly reduces computational burden but poses a mild limitation.
Since all variables in $C$ are jointly sampled, the dependence structure within selected variants is preserved.
However, selected variants and variants used as treatment to define the causal estimands, are independent in this simulation.
In a more realistic scenario they would only be independent once conditioned on principal components.

\subsubsection{Model Selection} \label{sec:sieve-neural-net}

The second requirement for simulations to be realistic is that the density estimators are sufficiently expressive to capture complex patterns, which means that the model class must be large.
Neural networks have been shown to be able to approximate a large class of functions, and to scale well to large datasets such as the UK Biobank~\citep{hornik1989multilayer,deepnull2022}.
We thus used two types of neural networks depending on the type of outcome variable.
For categorical variables, including binary outcomes, a multi-layer perceptron was used, while for continuous variables we used a mixture density network~\citep{haykin1998neural,bishop1994mixture}.
In all cases, in order to prevent overfitting, density estimators were trained as sieve estimators~\citep{chen2007large}.
That is, the size of the model was chosen data-adaptively by sequentially increasing the model capacity and applying an early-stopping criterion based on cross-validation performance~\citep{prechelt1998automatic}.
For illustration, a simplified training procedure is described in Algorithm~\ref{alg:sieve_network}.

\subsection{Considered Estimands} \label{sec:simulation-estimands}

We select a variety of estimation tasks representative of analyses in population genetics.
We focus on single-variant effects and epistatic effects which are quantified by the Average Treatment Effect (ATE) and Average Interaction Effect (AIE), respectively.

To account for the above challenges, we selected 25 estimands, listed in Table~\ref{table:simulation_estimands}, according to the following criteria.
All included estimands were supported by previous evidence of association to allow the realistic simulation to diverge from the null hypothesis of no effect.
Note however, that this is not mandatory and that the goal of the simulation is not the replication of these effect sizes.
For single variant effects, $5$ diverse traits were chosen:

\begin{itemize}
    \item Leukocyte count: Count variable.
    \item Body mass index (BMI): Continuous variable.
    \item Sarcoidosis (self-reported): Binary variable, $\approx 1000$ cases (rare).
    \item Multiple sclerosis: Binary variable, $\approx 1900$ cases.
    \item Other diseases of the digestive system (K90-K93): Binary variable, $\approx 25000$ cases.
\end{itemize}

For each trait, two variants were manually selected based on previous GWAS results from the GeneATLAS~\citep{canela2018atlas}, requiring $\text{p-value} < 10^{-5}$. 
These two variants were not in linkage disequilibrium and had different minor allele frequencies.\\

For interaction analyses, we used the following previous results from the literature:

\begin{itemize}
    \item Skin colour, shown to be associated with (rs1805007, rs6088372), (rs1805005, rs6059655) and  (rs1805008, rs1129038)~\citep{morgan2018genome}.
    \item Parkinson's disease, shown to be associated with (rs1732170, rs456998, rs356219, rs8111699) and (rs11868112, rs6456121, rs356219)~\citep{fernandez2019snca}
    \item Multiple sclerosis, shown to be associated with (rs10419224, rs59103106)~\citep{singhal2023evidence}
    \item Psoriasis, shown to be associated with (rs974766, rs10132320)~\citep{singhal2023evidence}
\end{itemize}

These estimands represent both rare and frequent genetic variations as well as a variety of trait types including continuous, count and binary; see Table~\ref{table:simulation_estimands} for full details.
Together, these estimands provide a representative set of analyses to assess the performance of semi-parametric estimators in population genetics.

\subsection{The Estimators} \label{sec:simulation-estimators}

We investigate the performance of the OSE and the wTMLE in both their canonical and cross-validated versions.
For the estimation of nuisance functions, we consider two strategies: (i) a GLMNet-based approach~\citep{friedman2010regularization} incorporating all 2-point interaction terms involving genetic variants, (ii) an XGBoost-based approach~\citep{chen2016xgboost} where the $\verb|(max_depth, lambda)|$ hyper-parameters are selected via cross-validation using a simple grid search.
In all cases, the cross-validation scheme is a 3-fold cross-validation, stratified across the variants acting as treatment variables, and the outcome variables.

\subsection{Results} \label{sec:simulation_results}

To investigate the performance of semi-parametric methods on biobank-scale data, we perform the analysis across two different dataset sizes ($50\,000, 500\,000$).
A simulation task is thus a triple: sample size, estimator, estimand.
For each task, a grid of $500$ bootstrap samples was run on the University of Edinburgh high-performance computing \href{https://www.ed.ac.uk/information-services/research-support/research-computing/ecdf/high-performance-computing}{Eddie} cluster.

\subsubsection{Null Simulation}

Due to the absence of confounders in the null simulation of Fig.~\ref{fig:samplers} (left), model misspecification is not expected to affect the estimation of the (interaction) effect of genetic variants on outcome.
All proposed estimators should thus provide asymptotic coverage at the nominal confidence level across all estimates (\eg, $95\%$).
Fig.~\ref{fig:null-simulation}A shows that this is the case for most, but not all, estimands.
For all four estimators (CV-)OSE and (CV-)wTMLE, there is at least one estimand for which the coverage is below $50\%$.
This could be because the asymptotic regime has not been reached, possibly due to near-positivity violations for rare variants, noting that coverage is improved in the larger simulation.
Exact control of positivity is challenging since it depends on the true, but unknown value of $p(V=v|W=w)$ for all observed $(v,w)$.
In the null simulation however, $p(V|W) = p(V)$ since $V$ and $W$ are sampled independently.
We assess the impact of positivity by excluding estimands that fail to meet a specified threshold in the original UK Biobank dataset, \eg, $p(V=v) < 0.01$.
Since estimands are multi-dimensional, representing various genotype changes, rather than discarding an entire estimand, we exclude only the genotype changes that do not meet the given threshold.
This approach is more conservative than the data-adaptive selection criterion for the propensity score truncation level proposed in~\citep{10.1093/aje/kwac087}.
\begin{figure}[H]
    \centering
    \includegraphics[scale=0.51]{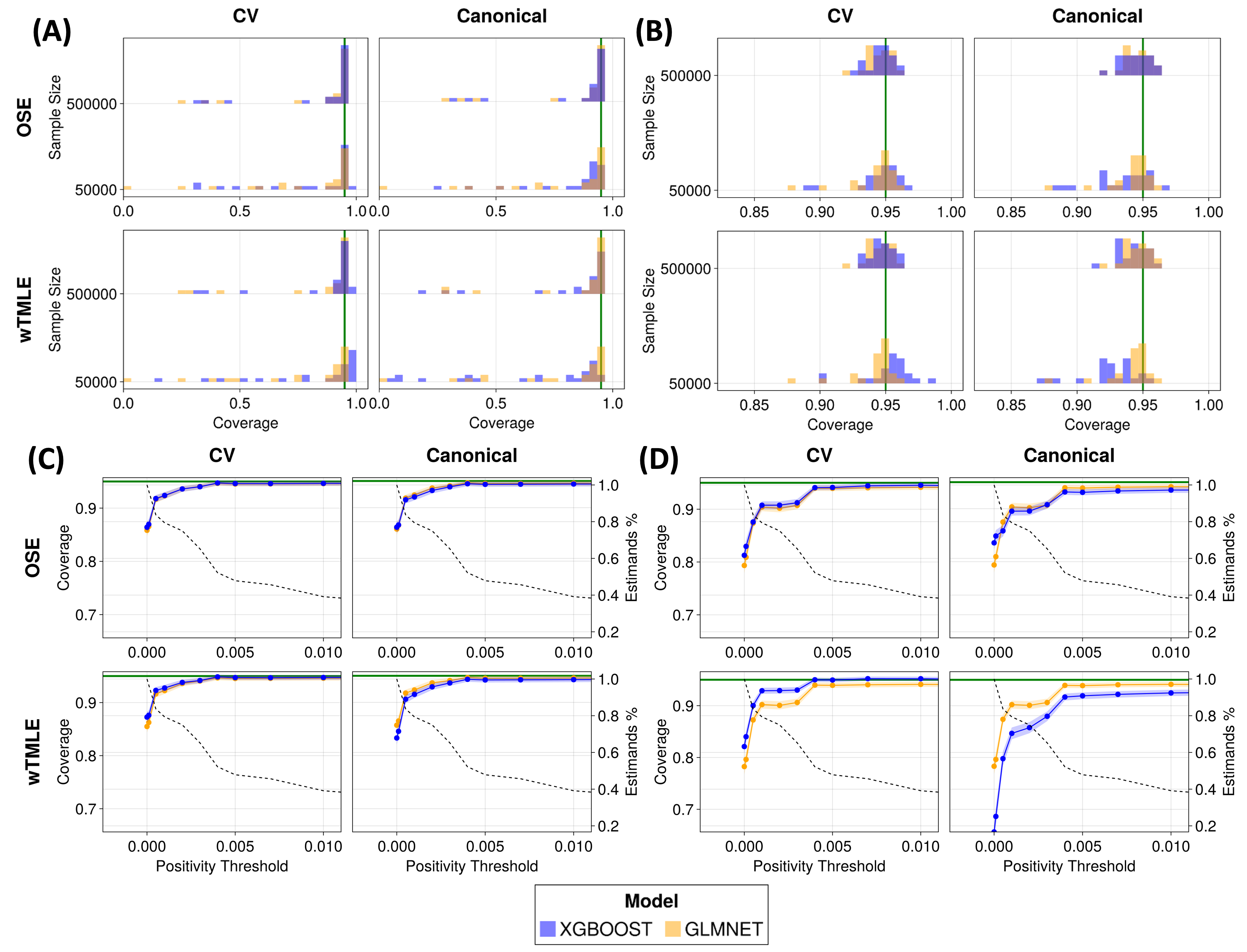}
    \caption{\textbf{Null Simulation Coverage.} Each plot is divided in 4 quadrants and presents coverage results. Rows represent the estimators (OSE, wTMLE) and columns the resampling strategies (CV, Canonical). 
    \textbf{(A-B) Coverage distributions across estimates when (A) no positivity threshold is imposed and, (B) a positivity threshold of $\geq 0.01$ is imposed.} Each quadrant is further sub-divided by sample size, $500\, 000$ (top) and $50\,000$ (bottom). All estimators reach nominal coverage (green line) across most, but not all, estimands when no positivity threshold is imposed. Larger sample sizes yield better results and, when the positivity threshold of $\geq 0.01$ is imposed, essentially nominal coverage for all four estimators (B). This indicates that the asymptotic regime has not been reached for all estimands at $50\, 000$ samples. As the positivity threshold increases, the coverage reaches its nominal level as demonstrated in panels C and D.
    \textbf{(C-D) Mean Coverage Across Positivity Thresholds ($\mathbf{500\, 000}$ (C) and $\mathbf{50\,000}$ (D)).} A point corresponds to a positivity threshold (x-axis) and mean coverage across estimands (y-axis). Only estimands meeting the positivity threshold are included in the mean coverage computation. The black decreasing dotted line represents the percentage of total remaining estimands' components after imposing the positivity threshold. For readability, the plots are limited to  positivity thresholds $< 0.01$. Above this threshold, the nominal coverage was reached across all estimates.}
    \label{fig:null-simulation}
\end{figure}
In Fig~\ref{fig:null-simulation}B, coverage estimates are shown only for those estimands which satisfy the threshold $p(V=v) \geq 0.01$, and coverage is seen to be near nominal.
Furthermore, by varying the positivity threshold value of $0.01$, we analyse the estimators' sensitivity to practical positivity violations.
Fig.~\ref{fig:null-simulation}C-D present coverage results across various positivity thresholds for sample sizes $500\,000$ (C) and $50\,000$ (D) respectively. 
As anticipated, convergence to nominal coverage is faster for the larger sample size.
However, for both sample sizes, on average, nominal coverage is achieved for all genotypes with frequencies as low as $0.005$.
This holds true across all estimates, except for the canonical wTMLE using XGBoost when the sample size is $50\,000$.
The likely cause of this deviation is overfitting, as the cross-validated estimator restores coverage to the desired level.
The percentage of remaining estimands drops rapidly with increasing positivity thresholds.
For the conservative $0.01$ threshold, approximately $40\%$ of estimands' components remain. \\

While of primary importance to researchers, coverage provides an incomplete view of an estimator's performance.
We additionally analyse the bias-variance decomposition across all estimation tasks.
Squared bias, variance and mean-squared error are estimated via bootstrap resampling as per Section~\ref{sec:bootstrap_estimators} and provided in Supplementary Table 6.
Here, the variance is not estimated using the influence function but as the sample variance of estimates on bootstrapped data.
However, these are comparable provided a sufficient positivity constraint is applied (see Fig~\ref{fig:appendix-bootstrap-vs-infcurve-var}).
For direct comparison, the variance of the influence function is also provided in Supplementary Table 6. For the null simulation, these are almost always similar (less than $15\%$ relative difference) for both sample sizes when using GLMNet, and for the larger sample size $500\,000$ when using XGBoost.
As expected, the bias and variance of estimators decrease with sample size for almost all estimands. The only $4$ exceptions (out of $100$), occur for the rarest trait in our dataset (sarcoidosis).\\

Interestingly, when using XGBoost, the bias of cross-validated estimators is always larger than that of their canonical counterpart.
This unexpected result seems to indicate that the asymptotic regime is not reached, possibly due to the low number of folds used throughout ($k=3$).
Effectively, each nuisance function is fitted on $2/3$ of the data, increasing the risk of overfitting on each partition.
In principle, this limitation could be overcome by increasing the number of folds.
However, the increased computational complexity is challenging in population studies of this scale.
Inspecting results from the same analysis when using GLMNet, a model less prone to overfitting, for nuisance function estimation, shows that bias estimates are more varied between the canonical and cross-validated estimators.
This observation supports the potential explanation for higher bias in cross-validated estimators when using a more flexible algorithm such as XGBoost for nuisance function estimation. \\

Together, these results demonstrate that in the absence of genetic effects, the false discovery rate can be appropriately controlled provided genetic variations are as frequent as $0.01$.
The cross-validated estimators implemented in TarGene can be used to trade bias and variance for improved coverage.

\subsubsection{Realistic Simulation} \label{sec:realistic-simulation}

We now turn to the results of the realistic simulation, where genetic variants influence traits, potentially in complex ways, as modelled by the Sieve Neural Network Estimator (SNNE) Algorithm~\ref{alg:sieve_network}.
The empirical performance of SNNE across different densities is shown in Supplementary Figure~\ref{fig:density-estimates-comparison}, demonstrating that this approach offers a flexible, data-adaptive solution for modelling complex data-generating processes while mitigating the risk of overfitting.\\

The issue of positivity, discussed in the previous section, persists in the realistic simulation, as shown in Fig.~\ref{fig:realistic-simulation}A, and is similarly resolved when imposing a positivity threshold $p(V=v) \geq 0.01$ as shown in Fig.~\ref{fig:realistic-simulation}B.
Fig.~\ref{fig:realistic-simulation}C-D indicate that for the same positivity threshold of $0.005$, nominal coverage is achieved when cross-validated estimators are combined with XGBoost.
This is likely due to two factors: (i) XGBoost effectively captures the complexity of the true data-generating process, and (ii) while XGBoost tends to overfit, cross-validation compensates for this tendency.
In large datasets, XGBoost consistently provides better coverage than GLMNet, regardless of the resampling strategy.
In settings such as the UK Biobank, the canonical version of XGBoost-based estimators offers an interesting trade-off between computational efficiency and coverage, with only a marginal $\approx1\%$ drop.
However, in smaller datasets, XGBoost underperforms when cross-validation is not employed, likely due to overfitting.
The analysis of bias, variance and mean squared error, similar to that of the null simulation, mostly leads to the same conclusion. The improved coverage of cross-validated estimators is largely gained at the expense of bias and variance when models prone to overfitting are used.
The sample variance and variance of the influence function are more likely to differ (i.e., more than $15\%$ relative difference) at the smaller sample size of $50\,000$ when using XGBoost; the difference is comparable between sample sizes when using GLMNet. This suggest model misspecification plays a larger role in the realistic simulation than in the null simulation where treatment is, effectively, randomly assigned, as expected. \\

\begin{figure}
    \centering
    \includegraphics[scale=0.52]{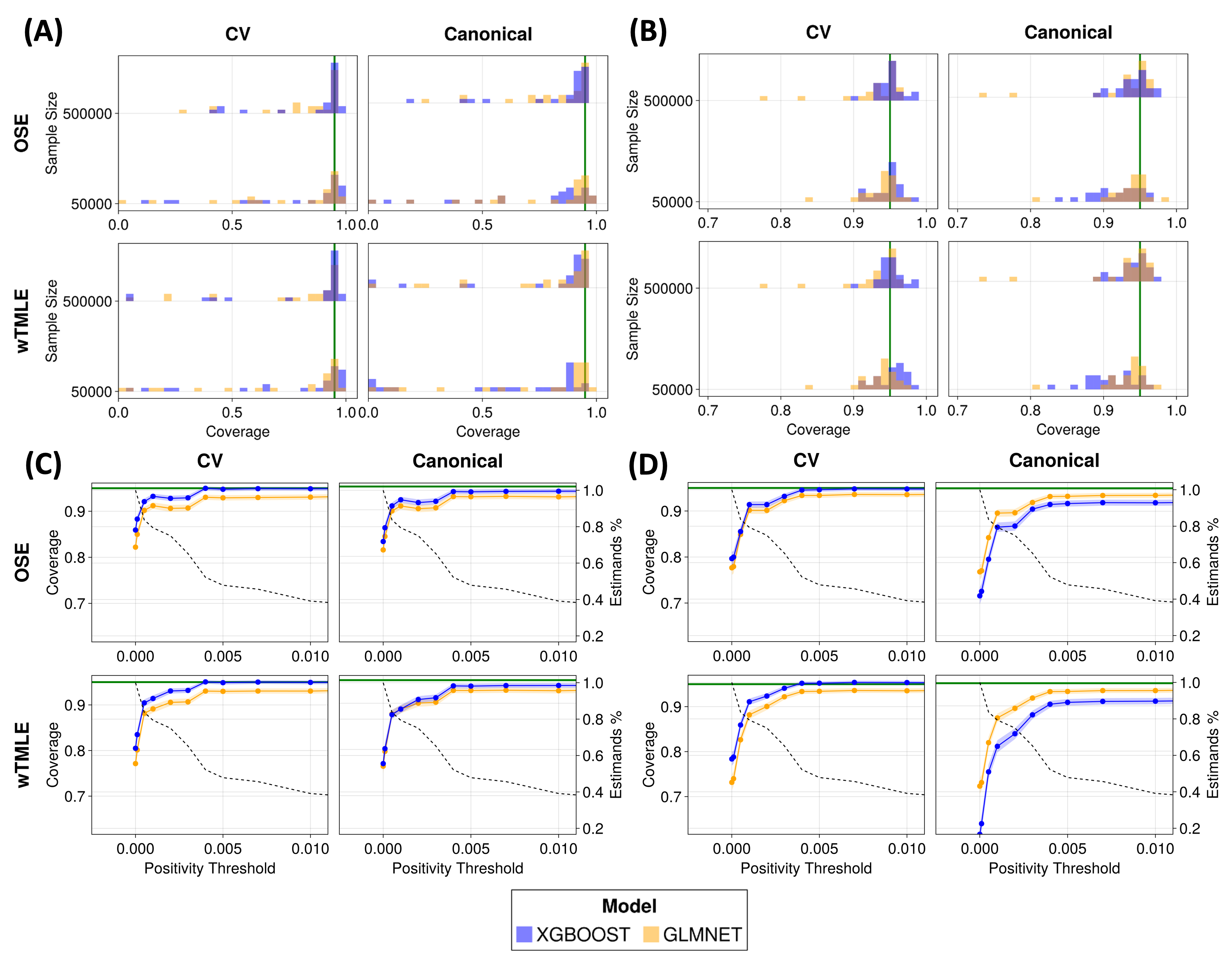}
    \caption{\textbf{Realistic Simulation Coverage.} The panel is organised exactly as Fig.~\ref{fig:null-simulation}. \textbf{(A-B) Coverage distributions across estimates when (A) no positivity threshold is imposed and, (B) a positivity threshold of $\geq 0.01$ is imposed.} Similarly to Fig.~\ref{fig:null-simulation}, larger sample sizes yield better results, more so for the smaller sample size of $50\, 000$ indicating that the asymptotic regime has not been reached for all estimands. As the positivity threshold increases, the coverage reaches its nominal level as demonstrated in panels B and C with the exception of the canonical OSE and wTMLE at sample size $50\, 000$. \textbf{(C-D) Mean Coverage Across Positivity Thresholds ($\mathbf{500\, 000}$ (C) and $\mathbf{50\,000}$ (D)).} In large sample settings, XGBoost-based estimators perform better than their GLMNet counterpart. For smaller sample-sizes, cross-validation is preferred for these models as they are more prone to overfitting.}
    \label{fig:realistic-simulation}
\end{figure}

In tandem with nominal coverage, we present the power of wTMLE across estimands, sample sizes, resampling schemes and models used to estimate nuisance functions in Fig.~\ref{fig:power-wtmle}.
A similar plot is presented for the OSE in Supplementary Figure~\ref{fig:power-ose}.
As expected, increased sample sizes result in higher power for both models.
Interestingly, cross-validation affects the power of the XGBoost model but not the GLMNet model.
This suggests that in the absence of overfitting there is no or little loss of power from cross-validated estimators.
Furthermore, this analysis shows that the power to detect ATEs (right of the dashed lines) is larger than the power to detect AIEs (left of the dashed lines).
This is expected since the variance, and hence power, of an estimator depends on the complexity of the estimand.

\begin{figure}
    \centering
    \includegraphics[scale=0.3]{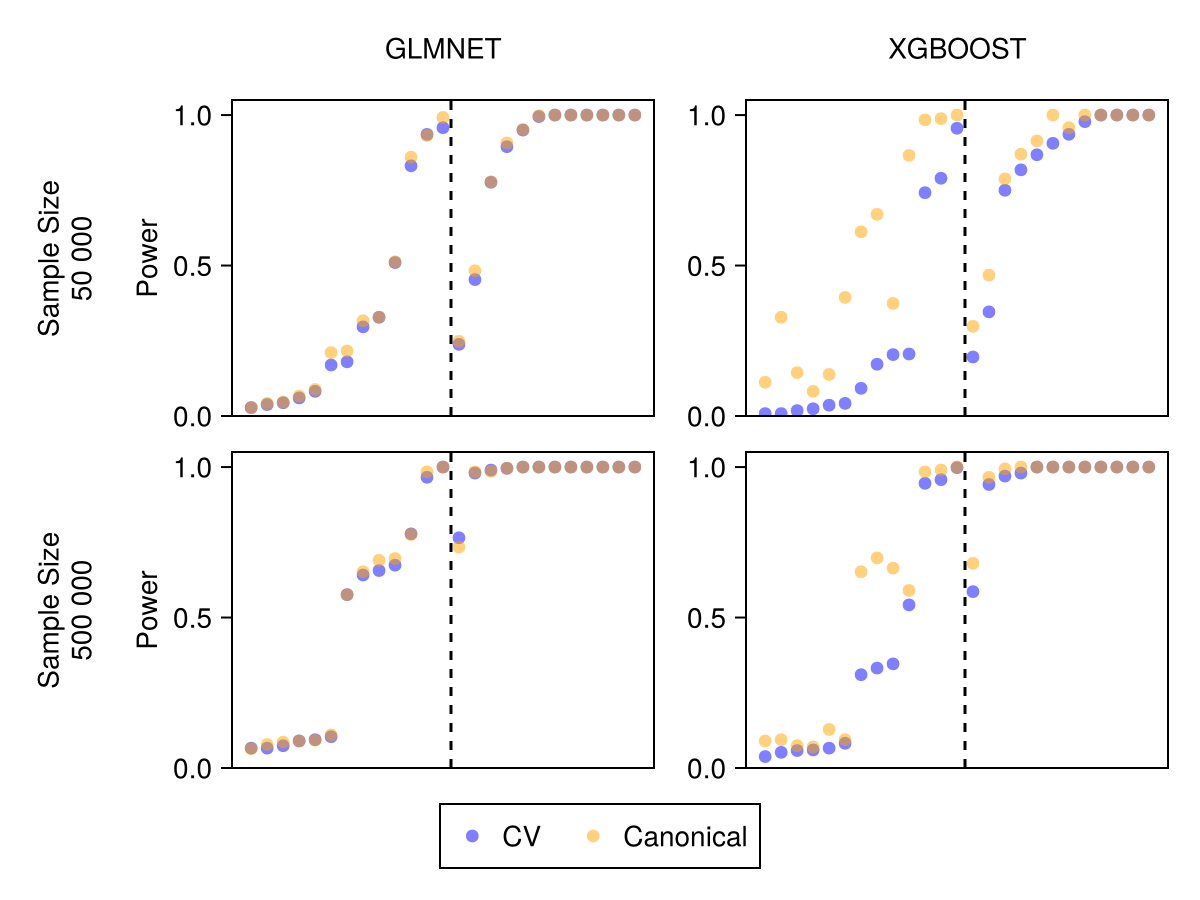}
    \caption{\textbf{Power analysis of weighted TMLE.} The plot is organised in 4 quadrants. Rows indicate sample sizes with $n = 50\, 000$ (top) and $n = 500\, 000$ (bottom), columns indicate the model used to fit nuisance functions with GLMNet (left) and XGBoost (right), and colour indicates the resampling scheme, \ie, cross-validated (blue) and canonical (orange). Each dot corresponds to a single estimand and the dashed lines separate AIEs (left) and ATEs (right).}
    \label{fig:power-wtmle}
\end{figure}

\section{Application to statistical genetics and large biobanks}\label{sec:UKB}

We performed a series of analyses using (i) the UK Biobank (UKB) cohort~\citep{UKB_bycroft}, and (ii) the All of Us (AoU) cohort~\citep{AllOfUs} to demonstrate the performance of semi-parametric estimators in the context of real-world genotype-phenotype inference.
For comparison with previous studies such as the GeneATLAS~\citep{canela2018atlas}, all analyses presented here were performed on the subpopulation of individuals with a self-reported white ethnic background.
Specifically, in UKB we selected individuals with data field $21000$ subcoding in the category $1$ (``White"), namely $1001$ (``British"), $1002$ (``Irish"), and $1003$ (``Any other white background").
In the AoU cohort, we included only individuals from non-Hispanic or Latino descent and inferred European genetic ancestry.\\

First, we contrast our approach with the field's commonly employed Linear Mixed Model method by performing a phenome-wide association study on UKB for a well studied variant in the \textit{FTO} gene region: rs1421085.
We compare the estimated effect of this SNP on BMI in the UKB with the corresponding estimate in AoU.
Second, we reveal gene-by-environment interactions between rs1421085 and two deprivation indices.
Third, we replicate pairwise genetic interactions previously reported for hair colour by~\cite{morganGenomewideStudyHair2018} and report additional evidence of interactions for both skin and hair colour.
And fourth, we show how TarGene can investigate higher-order interactions using multiple loci related to vitamin D receptor (VDR) function.

\subsection{Setting}

For all analyses we used the three canonical estimators, \ie, TMLE, wTMLE and OSE.
The nuisance functions $Q$ and $g$ were estimated using the practical recommendations for Super Learning~\citep{phillips2023practical}.
In brief, we use (i) $k$-fold cross-validation or stratified $k$-fold cross-validation based on the outcome type (continuous or binary, respectively), here $3 \leq k \leq 20$, selected adaptively based on the rarest class of each outcome, and (ii) included the constant fit, a regularized logistic/linear regression (ridge, $\lambda = 1$), a gradient-boosted tree ($\verb|n_round| = 100$, default parameters otherwise), and HAL with hyper-parameters \verb|max_degree| $=1$, \verb|smoothness_orders| $=1$, \verb|lambda| $= 30$~\citep{7796956}, as base learners.
However, we note that for the optimal performance of HAL in more bespoke analyses, the parameter $\lambda$, bounding the total variation norm, should be left unspecified so that it is chosen by the algorithm's internal cross-validation.\\

For confounding adjustment, we used the first 6 PCs computed from the genotypes.
We also added age at assessment and genetic sex as explanatory variables for the outcome model $\hat{Q}$. 
To investigate the impact of Sieve Variance Plateau correction, variance estimates were corrected using $100$ genetic similarity thresholds $\tau$.
Full run configurations details can be found \href{https://github.com/edbiomedai/targene-paper/tree/main}{\textcolor{blue}{online}}.

\subsection{\textit{FTO} PheWAS}

We conducted a phenome-wide association study (PheWAS) using UKB data for $768$ traits.
We compared our results with those from the LMM analysis of GeneATLAS by~\citep{canela2018atlas}.
For a comparison across populations, we contrast our UKB results on BMI obtained by TarGene with the corresponding results on BMI when running TarGene on the AoU cohort.
We chose a well-studied variant, rs1421085, located in the first intron of the \textit{FTO} gene. For this variant, the T-to-C nucleotide substitution has been predicted to disrupt the repression of {\it IRX3} and/or {\it IRX5}, thereby leading to a developmental shift from adipocyte browning to whitening programs and loss of mitochondrial thermogenesis~\citep{claussnitzerFTOObesityVariant2015a}.
This variant has also been associated with several related traits, such as BMI and obesity~\citep{fraylingCommonVariantFTO2007}.
Since rs1421085 has two different alleles, the three possible genotype changes we estimate for each trait are (i) TT~$\to$~TC, (ii) TC~$\to$~CC, and (iii) TT~$\to$~CC. \\

Since TarGene relies on semi-parametric estimation theory, we can directly assess whether the effect on outcome of the first C substitution (TT~$\to$~TC) is different from the effect on outcome of the second substitution (TC~$\to$~CC) by an application of the functional delta method~\citep{van2000asymptotic}.
This difference can be estimated with the semi-parametric estimator $\hat{\psi}_{\Delta,n}$ of Eq.~\ref{eq:nonlinear_allelic_difference}.
While such questions can be addressed with LMM methods, \eg, by additive-dominance models~\citep{doi:10.1126/science.abn8455}, typically large-scale LMM studies neglect allelic effect differences, instead assuming that the two genotype changes have the same effect.
To allow for a comparison of TarGene results with LMM methods, we compare GeneATLAS results to our TT~$\to$~TC estimate, noting that they are comparable if the Allelic Effect Difference $\psi_{\Delta}$ is zero.

\subsubsection{Comparing Effect Sizes between methods}

We begin by illustrating the difference between our reported TT~$\to$~TC effect on BMI, and the effects present in GWAS catalogues~\citep{buniello2019nhgri} in Fig.~\ref{fig:rs1421085-plots}A. 
We discuss allelic effect differences more generally in the next section.
The figure shows that all three double-robust estimators are concordant and report statistically lower effect sizes than the two linear/linear-mixed models which, in contrast, are discordant with one another.
Apart from BMI, semi-parametric point estimates are mostly aligned with those produced by LMMs (Fig.~\ref{fig:rs1421085-plots}B, left).
This is expected since in the absence of confounding, effect estimates with linear models are robust to model misspecification, offering reliable estimates even when the true relationship between variables is not perfectly linear.
For rs1421085, the absence of confounding by principal components is shown in Fig.~\ref{fig:appendix-rs1421085-UKB-pcs}.
Therefore, for this variant, the main source of differences between semi-parametric and linear estimates will likely be due to allelic effect differences. \\

Furthermore, Fig.~\ref{fig:rs1421085-plots}B shows that the distribution of p-values is shifted towards less significant values as compared to GeneATLAS estimates.
After FDR correction across the $768$ traits, TarGene finds fewer significant results at the $0.05$ level ($63$ traits) than GeneATLAS ($159$ traits).
This phenomenon highlights the risk of an increased false discovery rate due to overly restrictive parametric assumptions.
A summary table of all significant estimation results after multiple testing adjustment is provided in Supplementary Table~$1$.

\begin{figure}
    \centering
    \includegraphics[scale=0.69]{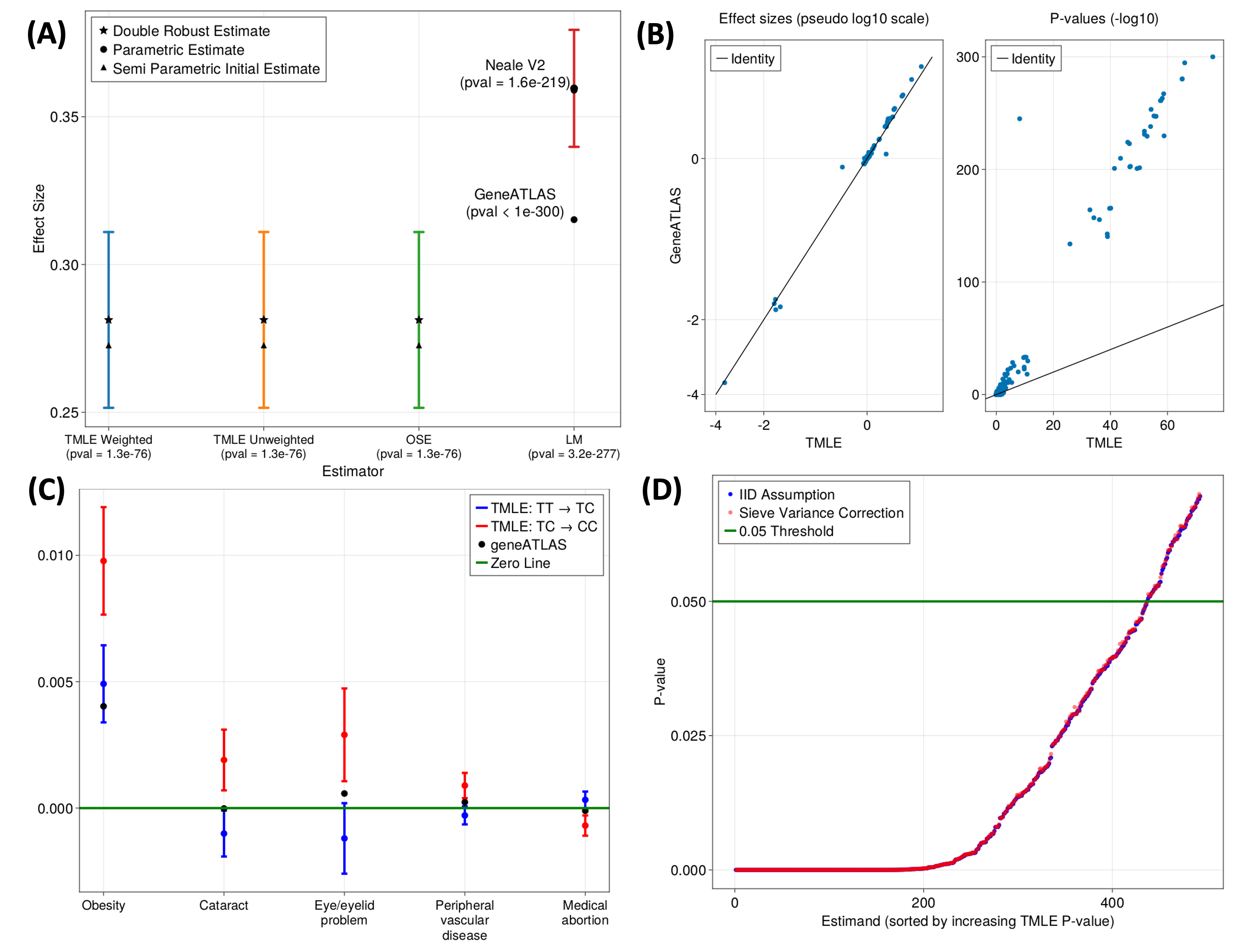}
    \caption{\textbf{TarGene results and comparison with the linear mixed model on UKB data.}
    ({\bf A})~\textbf{Inference results}. Comparison of methods to estimate the effect size of rs1421085 on body mass index (BMI; UK Biobank Data-Field 23104). All double robust estimators share the same initial fit and apply different targeting strategies: weighted TMLE (blue), unweighted TMLE (orange), and OSE (green). The three estimates are concordant and exhibit a statistically significant lower effect size than the linear model based inference (red). Neale V2 and GeneATLAS use a linear model and linear-mixed model, respectively, but do not report standard deviations.
    We refit the Neale V2 linear model on this data to obtain a confidence interval (red).
    The central value of GeneATLAS does not lie within the 95\% confidence interval of Neale V2.
    In contrast, all three double robust estimators are in complete agreement.
    See Panel C of Fig.~\ref{fig:appendix-rs1421085-plots} for a comparison between the initial estimates of effect sizes, relying on Super Learning, against the final effect sizes after the TMLE step.
    ({\bf B})~\textbf{Comparison with GeneATLAS}. Comparison of effect sizes (left) and p-values (right) reported by TarGene (TMLE) and GeneATLAS (LMM). Effect sizes are concordant overall on this study but the p-values reported by TarGene are more conservative. Because TarGene's p-values are more robust to model misspecification, GeneATLAS' results likely contain an inflated set of false discoveries.
    ({\bf C}) \textbf{Non-Linear effects}. A selection of traits for which rs1421085 TT~$\to$~TC and TC~$\to$~CC effect estimates are significantly different; Supplementary Table~2 contains the complete list. Effect sizes are reported with associated $95\%$ confidence intervals together with estimates from GeneATLAS' LMM fits (black data points)~\citep{canela2018atlas}. The latter almost always fall in-between our TT~$\to$~TC and TC~$\to$~CC estimates, indicative of an averaging effect. [Continued]
    }
\label{fig:rs1421085-plots}
\end{figure}

\begin{figure}[t]
\contcaption{
    ({\bf D}) \textbf{Sieve variance correction}. P-values obtained from two variance estimation methods for rs1421085. In red, the individuals in the UK Biobank are assumed to be independent and identically distributed (iid), while in blue, a sieve correction method is applied to account for the population dependence structure. Each p-value corresponds to a specific parameter of interest for which the initial iid estimate was under the 0.07 threshold.
    See Panels A and B of Fig.~\ref{fig:appendix-rs1421085-plots} for the unfiltered plot and an example of a Sieve Plateau curve, and Panel D for the histogram of the genetic relationship matrix.
    }
\end{figure}

\subsubsection{Allelic Effect Differences} \label{sec:fto_ukb}

We find 39 traits for which rs1421085 displays a significant Allelic Effect Difference, 35 of which are highly correlated with BMI. For instance, we find that the departure from TT to TC is associated with an increased weight (UKB field 21002) of $0.78$~kg ($95\%$ CI: $0.69 - 0.86$~kg). In comparison, the departure from TC to CC is associated with a significantly larger increase of $1.33$~kg ($95\%$ CI: $1.20 -  1.45$~kg). For illustration, a subset of significant non-linear traits is presented in Fig.~\ref{fig:rs1421085-plots}C; Supplementary Table~$2$ contains the complete list. As might be expected, most estimates reported by GeneATLAS, based on a LMM approach, fall in-between estimates from our two scenarios, representative of an averaging effect.\\

Notably, some traits seem to display opposite effect sizes (before multiple testing adjustment) for the two allelic changes TT~$\to$~TC and TC~$\to$~CC.
Thus TarGene can capture variant-trait pairs displaying heterozygote advantage~\citep{Heterozygote_Advantage}. Such patterns cannot be detected by a linear model that assumes equal allelic effect sizes.

\subsubsection{Sieve Plateau Variance Estimation} \label{sec:svp_practical}

In Section~\ref{sec:svp}, we proposed to account for the dependence among individuals in the variance estimates of semi-parametric estimators using Sieve Plateau (SP) variance estimation of~\citep{SievePlateau}.
The SP method is computationally intensive, requiring (i) computation of the GRM, an $n \times n$ matrix (for UKB, $n \approx 450\,000$), and (ii) for each estimand and each threshold $\tau$, matrix multiplications involving GRM components and influence curves.
However, since SP only increases variance estimates, it is sufficient to consider estimates that are significant at a given threshold (\eg, $p < 0.05$) before SP variance correction.
In Fig.~\ref{fig:rs1421085-plots}D, we show the effect of this SP variance correction for all effect sizes obtained for rs1421085 (with initial $p < 0.07$ for improved readibility).
The p-values resulting from both the iid (red) and the sieve variance plateau estimators (blue) are reported and show little difference.
An example sieve plateau variance curve for the effect estimate of rs1421085 on BMI shows an increased variance of $\approx 1.4\%$ (Fig.~\ref{fig:appendix-rs1421085-plots}).\\

The covariance term of influence functions of related individuals are thus likely negligible as compared to the individual variance terms in Eq.~\ref{eq:SP_variance_estimator}.
The impact of the SP variance method in a more diverse population, or in more related family-based cohort, is an interesting research direction.

\subsubsection{Comparing effect sizes between cohorts}

Next, we sought to investigate whether the UKB results in Section~\ref{sec:fto_ukb} replicate in another large biobank, noting potential differences in allelic frequencies and environmental factors between biobanks.
To explore this, we leveraged the All of Us (AoU) cohort~\citep{AllOfUs}, a United States-based cohort.
This study was performed on AoU's cloud-based platform called the Researcher Workbench, which is compatible with the TarGene software.\\ 

We first constructed a cohort with inclusion/exclusion criteria that matched our UKB study, requiring that each participant had complete data available for BMI, age at BMI measurement, sex at birth, and genetics.
Further, we included only individuals of inferred European genetic ancestry and of non-Hispanic or Latino origin to mirror inclusion criteria for our UKB study.
Principal components were computed across all remaining participants ($n = 122,752$) and genotyped SNPs, excluding any variants in LD with our SNP-of-interest, rs1421085.
In this cohort, 6 PCs were sufficient to capture population stratification and were used as confounding variables for the estimation step, see Fig. \ref{fig:appendix-rs1421085-AoU-screeplot}.\\ 

We found that rs1421085 also had a significant effect on BMI for each allelic change, consistent with results found in the UKB.
While TT to TC was associated with an increase of $0.59~kg/m^2$ ($95\%$ CI: $0.42 - 0.77$) and p-value $< 1\times10^{-10}$, the departure from TC to CC was associated with an increase of $0.88$~kg/m${}^2$ ($95\%$ CI: $0.47 - 1.28$) and p-value $< 1\times10^{-4}$.
Estimates found in the AoU cohort were larger than those found in UKB, but trends were consistent between cohorts.
The effect of the allele change $TC \rightarrow CC$ on BMI was also shifted higher than $TT \rightarrow TC$, however, estimates in the AoU cohort harboured higher levels of uncertainty, and so no significant non-linear effect was found between each individual allele change in AoU (p-value $= 0.27$), see Fig.~\ref{fig:appendix-rs1421085-BMI-AoU-UKB-effects}.
The increased uncertainty may be due to the smaller sample size of our AoU cohort ($n = 122,\!752$) relative to the UKB cohort ($n = 459,\!207)$, a decrease in sample size (to $\approx 27\%$ of UKB) and minor allele frequency for the C allele ($\approx 10\%$ lower) in AoU, resulting in substantially reduced sample sizes for the CC genotype, see Fig. \ref{fig:appendix-rs1421085-BMI-AoU-UKB-effects}.
Alternatively, this effect may also be due to increased variation of BMI in the AoU cohort due to environmental effects.

\subsection{Gene-by-environment interaction}

In the previous section, our TarGene PheWAS on UKB confirmed that rs1421085 is significantly associated with BMI and various BMI-related traits. This association between rs1421085 and BMI was also found to be replicated in the All of Us cohort.
Since BMI has also been associated with area-based deprivation \citep{Twaits_Alwan}, it is natural to investigate the potential interactions between rs1421085 and deprivation.
There are currently two main measures of deprivation in the UK: the Townsend Deprivation Index (TDI) and the Index of Multiple Deprivation (IMD) used by~\citep{Twaits_Alwan}.
We used these indices in two separate phenome-wide interaction studies (PheWIS), one between rs1421085 and TDI and one between rs1421085 and IMD.
Since deprivation indices are continuous quantities, we discretised them using quintiles and compared the most extreme quintiles.
For rs1421085, we compare the three genotype groups TT, TC, CC.\\

We found 21 significant BMI related traits after FDR correction ($0.05$ level) that are captured by both TDI and IMD (Supplementary Table~$3$). For instance, whilst we have seen that an increase in the number of C alleles in an individual is associated with an increase in body weight and that most deprived individuals are more likely to be overweight, the interaction of these factors is super additive: an increase of $1.07$~kg (p-value: $1.09\times10^{6}$, adjusted p-value: $1.4\times10^{-3}$) for TDI, and an increase of $0.91$~kg (p-value: $3.89\times10^{-5}$, adjusted p-value: $8.0\times10^{-3}$) for IMD.

\subsection{Epistatic interaction}

Detection of epistasis in complex traits is challenging~\citep{Haley_Epistasis,Mackay2024}. For example, epistatic interactions are expected to be much smaller than main effect sizes, which can already be small for polygenic traits. In this section, we explore the potential for semi-parametric estimators to reveal such interactions. For that purpose, we rely on a study investigating hair colour, in which nine pairs of variants were reported to be statistically interacting with red-hair using a logistic regression model and a likelihood ratio test~\citep{morganGenomewideStudyHair2018}. We note, however, that the likelihood ratio test statistic measures interactions on a multiplicative scale while we investigate interactions on an additive scale which is often of more direct public health relevance \citep{VanderWeele_interactions}. In particular, the existence of interactions on one scale does not imply the existence of interactions on the other scale. We found that five of the nine reported epistatic results (Table~\ref{table:hair-color}) are also revealed by semi-parametric estimation methods. Two were not reproduced and two were not computed because they did not pass the marginal positivity threshold.
   
\begin{table}[H]
\centering
 \begin{tabular}{||c c c c||} 
 \hline
 Variant 1 & Variant 2 & Effect-size & P-value \\ [0.5ex] 
 \hline\hline
 rs1805005 (GG~$\to$GT) & rs6059655 (GG~$\to$AG) & $1.8\times10^{-2}$ & $1.0\times10^{-19}$ \\ 
 rs1805007 (CC~$\to$CT) & rs6088372 (CC~$\to$CT) & $3.0\times10^{-2}$ & $1.4\times10^{-40}$ \\
 rs1805008 (CC~$\to$CT) & rs1129038 (TT~$\to$CT) & $-1.9\times10^{-2}$ & $2.9\times10^{-15}$ \\
 rs2228479 (GG~$\to$GA) & rs6059655 (GG~$\to$AG) & $-1.6\times10^{-2}$ & $1.5\times10^{-24}$ \\
 rs885479 (GG~$\to$GA) & rs6059655 (GG~$\to$AG) & $-1.5\times10^{-2}$ & $1.6\times10^{-16}$ \\ [1ex] 
 \hline
 \end{tabular}
 \caption{Summary table of reproduced significant results for red hair color.}
\label{table:hair-color}
\end{table}

In total, we find 27 significant epistatic signals for traits corresponding to either skin or hair colour (Supplementary Table~$4$).
This is expected because hair and skin colour are known to co-vary~\citep{sulemGeneticDeterminantsHair2007}.

\section{Software and pipeline}\label{sec:Software}

As part of this work, we release two open-source packages and one software for scalable semi-parametric estimation of causal effects.

\subsection{General Purpose Package and CLI}

\href{https://targene.github.io/TMLE.jl/stable/}{TMLE.jl} is a \href{https://julialang.org/}{\texttt{Julia}} package for the semi-parametric estimation of causal effects from tabular datasets. The package currently supports one-step and 
targeted minimum loss-based estimation of the counterfactual mean, the average treatment effect and the average interaction effect.
The package also supports multi-dimensional and categorical treatment variables.
The associated \href{https://github.com/TARGENE/TMLECLI.jl}{TMLECLI.jl} provides an executable command-line interface.

\subsection{Population Genetics Nextflow Pipeline}

The \href{https://targene.github.io/targene-pipeline/stable/}{TarGene} software is a scalable \href{https://www.nextflow.io/}{\texttt{Nextflow}} pipeline for semi-parametric estimation of genetic effects from population cohorts.
We show below that computationally intensive PheWAS and GWAS studies are possible on modern computing resources.
In our case, all runs were performed on the Edinburgh high-performance \href{https://www.ed.ac.uk/information-services/research-support/research-computing/ecdf/high-performance-computing}{Eddie} cluster.
While TarGene comprises standard pre-processing procedures, we focus on the two unique processes: (i) the semi-parametric estimation process and (ii), the sieve variance plateau estimation process.

\subsubsection{Semi-Parametric Estimation Process}

In this section, we investigate the run time of the TMLE process for the two most common genetic studies: GWAS and PheWAS.
In both cases, we are thus computing the Average Treatment Effect for each individual variant on trait by comparing the major/minor to the major/major genotype.
Covariates were set to include the first 6 principal components, age and sex.
The benchmark is performed on a single core compute node.\\

We investigate the following four nuisance parameters estimation strategies (applied to both the outcome regression $\hat{Q}$ and the propensity score $\hat{g}$) from the most basic to the most comprehensive:

\begin{itemize}
    \item GLM: Standard generalized linear model
    \item GLMNet: GLM with regularization hyperparameter tuning over 3-folds cross-validation.
    \item XGBoost: The \href{https://xgboost.readthedocs.io/en/stable/}{gradient boosting trees} method with hyper-parameter tuning over 10 different settings in a 3-folds cross-validation scheme.
    \item SL: Super Learning including both the previous XGBoost and GLMNet combined with an outer 3-folds cross-validation.
\end{itemize}

We first focus on the PheWAS setting for which run time estimates are provided in Table~\ref{tab:phewas}.
In a PheWAS the estimation of the propensity score, $\hat{g}$, only needs to be performed once and can be re-used across all traits.
The computational complexity is thus driven by the estimation of each regression, $\hat{Q}$, and associated targeting steps.
The same remark holds for the targeting steps corresponding to the various genetic changes.
Computing the effects of the additional major/minor~$\to$~minor/minor and major/major~$\to$~minor/minor only costs two additional targeting steps while re-using the current $\hat{Q}$.
In all cases, running a PheWAS using TarGene is feasible even without access to a high-performance computing platform.

\begin{table}[H]
    \centering
    \begin{tabular}{||c | c ||} 
    \hline
     Learning Algorithm & Time (hours) \\ [0.5ex] 
     \hline
     GLM & 2.2 \\ 
     \hline
     GLMNet & 4.5 \\
     \hline
     XGBoost & 8.8 \\
     \hline
     SL & 30 \\
    \hline
    \end{tabular}
    \captionsetup{justification=centering}
    \caption{PheWAS run times for various nuisance parameters estimation strategies.}
    \label{tab:phewas}
\end{table}

We now turn to the GWAS setting for which run time estimates are provided in Table~\ref{tab:gwas}.
In this case, $\hat{Q}$ and $\hat{g}$ need to be estimated for each variant.
In order to obtain a run time estimate for a GWAS it is sufficient to compute the run time for one variant and simply multiply by the number of genotyped variants in a typical GWAS; here we take $600\,000$.
However, because the run time of the propensity score fit varies depending on the variant, we instead run the TMLE process over 100 variants and report the mean run time as a more accurate estimate.
While it is not currently possible to run a GWAS on a personal laptop, we find that access to a modern computing platform makes this kind of study feasible using TarGene.

\begin{table}[H]
    \centering
    \begin{tabular}{||c | c | c ||} 
    \hline
     Learning Algorithm & Unit Time (seconds) & Projected GWAS Time (hours) \\ [0.5ex] 
     \hline
     GLM & 13 & 10\\ 
     \hline
     GLMNet & 57 & 48 \\
     \hline
     XGBoost & 95 & 72 \\
     \hline
     SL & 451 & 375 \\
    \hline
    \end{tabular}
    \captionsetup{justification=centering}
    \caption{GWAS run times. The unit time corresponds to a single variant/trait pair. The projected GWAS time assumes $600\,000$ variants and $200$ folds parallelization.}
    \label{tab:gwas}
\end{table}

\subsubsection{Sieve Variance Plateau Process}

We present benchmarks for Sieve Variance Plateau estimation on a 20-core compute node in Table~\ref{tab:sieve_timing}.
For each parameter, the procedure computes a given number of variance estimates, here either $10$ or $100$ $\tau$-values.
As can be seen from the table, computational time increases sub-linearly with both the number of estimates and the number of parameters.
This is mainly because, for a small number of parameters, the computational time is driven by reading the GRM from disk hence under-using the multi-threading power of the node.
As soon as the number of parameters becomes large, \eg, for a GWAS or PheWAS, all cores can be utilised simultaneously to maximum efficiency.
Given the computational complexity of the SVP correction, we recommend the process be run for estimates just below the required significance threshold only.

\begin{table}[h!]
    \centering
    \begin{tabular}{||c | c | c | c||} 
    \hline
     Number of estimates per curve & Number of parameters & Time (s) & Time (h) \\ [0.5ex] 
     \hline
     100 & 419 & 71836 & 19.9 \\ 
     \hline
     10 & 419 & 11926 & 3.3 \\ 
     \hline
     100 & 1 & 30966 & 8.6 \\
    \hline
    10 & 1 & 6779 & 1.8 \\
    \hline
    \end{tabular}
    \captionsetup{justification=centering}
    \caption{Sieve Plateau variance estimation benchmarks.}
    \label{tab:sieve_timing}
\end{table}

\subsection{Supported databases and workflows}
The TarGene \texttt{Nextflow} pipeline can be run seamlessly on two major databases: (i) the UK Biobank (downloaded data)~\citep{UKB_bycroft}, and (ii) the cloud-based All of Us Researcher Workbench~\citep{AllOfUs}.
Inbuilt support for future use cases, such as the upcoming cohort \href{https://ourfuturehealth.org.uk}{Our Future Health}, will be added when available.
TarGene supports multiple workflows and study designs: Genome-wide association study (GWAS), phenome-wide associated study (PheWAS), and custom study design of (i) single or joint variant effects on outcome, (ii) gene-by-gene ($G \times G$) effects up to any order, and (iii) gene-by-environment ($G \times E$) effects up to any order.


\newpage

\section{Discussion}\label{sec:Discussion}

In this paper, we have introduced Targeted Genomic Estimation (TarGene), a method based on targeted semi-parametric estimation theory \citep{OS_Pfanzagl,MR2867111} following the TL roadmap for the estimation of genetic effects of single variants and interactions.
TarGene offers a number of distinct advantages over commonly employed LMM approaches in population genetics as it avoids model-misspecification bias, produces asymptotically normal and efficient estimates, and is doubly-robust.
We remark that linear models are a standard part of the SL library of TarGene so if statistical inference using a linear model is correct, then TarGene will choose the model in a data-driven manner.
Furthermore, due to the flexibility of its SL libraries and the TMLE step, it can be readily applied to more ancestry heterogeneous biobanks such as \emph{All of US}~\citep{AllOfUs} or the \emph{Million Veterans Program}~\citep{GAZIANO2016214}, as well as more strongly inter-related cohorts such as island communities.
We have demonstrated in extensive realistic and data-adaptive simulations that TarGene achieves nominal coverage and control of type I error, provided the minor allele frequency is bounded from below by $0.01$.\\

Whilst TarGene's run time is slower than some other GWAS estimation methods, the strength of TarGene lies in bespoke analyses of effect sizes and interactions among targeted variants of interest, providing mathematically guaranteed coverage of the ground truth.
The run time of this workflow depends on how precise and unbiased a researcher wishes to be regarding the answer to their question of interest.
For researchers interested in applying TarGene for genome-wide studies across multiple traits, we note that there is a trade-off between computational speed and guaranteeing ground-truth coverage of estimates. 
In such a scenario we therefore recommend equipping the SL with computationally light algorithms only, such as a linear model, GLMnet or LMM, reducing the cross-validation burden significantly, before running the TMLE step. The lightest version of TarGene can be run in 10 hours on a HPC cluster, for a GWAS with~600,000 variants (Table~\ref{tab:gwas}).
Significant hits from an initial light run can then be verified by a more comprehensive TarGene run with a stacked SL library. 
Taking advantage of the approach described by~\cite{TuglusvanderLaan+2009} nevertheless allows the researcher to control the final FDR of this two-stage procedure. We also recommend applying the SP variance correction to estimates with p-values near the FDR threshold to ensure their robustness. 
We also remark that, although the SP variance estimator requires reading the GRM into memory, in comparison to LMMs, TarGene does not require the memory-intensive inversion of the GRM.
This inversion step may be prohibitive for UKB-scale cohort sizes, depending on institutional resources.\\

Finally, TarGene estimators address any statistical gap due to model misspecification as well as the causal gap due to population stratification, they do not currently close the causal gap due to LD.
While attempts have been made in the statistics literature to address the causal gap due to LD through fine-mapping, \eg, SuSiE~\citep{10.1111/rssb.12388,10.1371/journal.pgen.1010299}, and KnockOffGWAS~\citep{Sesiae2105841118}, these methods do not close the statistical gap due to parametric assumptions or are unable to report (interaction) effect estimates, respectively.
These quantities are essential for explaining how variants, via biological mechanisms and regulatory functions, modify a trait or disease risk.
A unified method that closes both causal and statistical gaps in genomic medicine has yet to be developed.\\

In future work, we will investigate various collaborative TML estimators (CTMLE)~\citep{vanderLaanGruber+2010,doi:10.1177/0962280217729845} to reduce the causal gap due to LD. 
CTMLE has been applied successfully in situations with many potential confounders, see \eg,~\cite{Mireille2016, Pirracchio2018}.
Additionally, we plan to explore non-linearities~\cite{Neale-nonlinearity-science} in variant allelic copies on trait using TarGene.
We will also investigate the contribution of epistatic interactions of specific variants on various polygenic traits for a variety of biological mechanisms.

\section{Acknowledgements} 

This research has been conducted using the UK Biobank Resource under Application Number 53116.
OL was supported by the United Kingdom Research and Innovation (grant EP/S02431X/1), UKRI Centre for Doctoral Training in Biomedical AI at the University of Edinburgh, School of Informatics.
BRH was supported by the Health Data Research UK \& The Alan Turing Institute Wellcome PhD Programme in Health Data Science (Grant Ref: 218529/Z/19/Z).
MvdL is supported by NIH grant \verb|R01AI074345|.
CPP was funded by the MRC (\verb|MC_UU_00007/15|).
AK was supported by the XDF Programme from the University of Edinburgh and Medical Research Council (\verb|MC_UU_00009/2|), and by a Langmuir Talent Development Fellowship from the Institute of Genetics and Cancer, and a philanthropic donation from Hugh and Josseline Langmuir.
The authors thank John Ireland for guidance on running TarGene on the University of Edinburgh high-performance computing cluster (Eddie).
The authors gratefully acknowledge All of Us participants for their contributions, without whom this research would not have been possible. We also thank the National Institutes of Health’s All of Us Research Program for making available the participant data examined in this study.
For the purpose of open access, the author has applied a creative commons attribution (CC BY) licence to any author accepted manuscript version arising.

\section{Data availability}
This study used data from the All of Us Research Program’s Controlled Tier Dataset v7, available to authorized users on the Researcher Workbench.

\section{Competing interests}
No competing interests declared.

\clearpage

\appendix

\section{}

\subsection{(Weighted) fluctuation solves the efficient influence curve}\label{app:fluctuation}
We show that applying the (weighted) TMLE update step solves the efficient influence function.
The output of either procedure is the vanishing of the first-order bias,
\begin{equation}
    B_n(Q^*_n,g_n) \equiv \frac{1}{n} \sum_{i=1}^{n} D^{*}\bigl(Q^*_n,g_n\bigr)(o_i) = 0,
\end{equation}
where $D^{*}\bigl(Q^*_n,g_n\bigr)$ denotes the IF of the target parameter.
The update step can be performed in two different ways: (i) by \emph{maximising} the log-likelihood of the fluctuation $Q_{n,\epsilon}$ through the initial fit $Q_n$, or (ii) by \emph{minimising} an appropriately chosen loss function $\mathcal{L}(Q_{n,\epsilon})$ of the fluctuation $Q_{n,\epsilon}$ through the initial fit $Q_n$ as in Eqs.~\ref{eq:fluctuations},~\ref{eq:fluctuations_weighted}.
The former approach is called targeted maximum likelihood estimation whereas the latter is called targeted minimum loss-based estimation; both are abbreviated TMLE.
We apply the loss-based approach to estimate the ATE with clever covariate
\begin{equation}
    H(g_n)(A,W) = \frac{2A-1}{g_n(A,W)}
\end{equation}
or, alternatively, weighted clever covariate $H'(A,W) = 2A-1$.

\begin{prop}
    Let $Y$ be a binary outcome, let $f$ be a function of the data $O$ taking values in the unit interval $[0,1]$, and consider the log-loss function
    \begin{equation}
        \mathcal{L}[f](O) = -\log \Bigl\{ f(O)^Y [1-f(O)]^{1-Y}\Bigr\}.
    \end{equation}
    Applying TMLE by (iteratively) minimising the log-loss function $\mathcal{L}[Q^k_n(\epsilon)]$ of the fluctuation $Q^k_n(\epsilon)$ with respect to $\epsilon$ given by the logistic regression
    \begin{equation}
        Q^k_n(\epsilon) = \expit \bigl\{ \logit Q^{k}_n + \epsilon H \bigr\}, \quad \text{where} \quad k \geq 0,
    \end{equation}
    using $\hat{\epsilon} = \arg \min_{\epsilon} \sum_{i}^n \mathcal{L}\bigl[Q^k_n(\epsilon)\bigr](o_i)$ to define $Q^{k+1}_n \equiv Q^k_n(\hat{\epsilon})$, solves the empirical IF. 
    
    Similarly, let $g$ be the propensity score, and consider the $g$-weighted log-loss function
    \begin{equation}
        \mathcal{L}_g[f](O) = -\frac{1}{g(O)} \log \Bigl\{ f(O)^Y [1-f(O)]^{1-Y}\Bigr\}.
    \end{equation}
    Applying wTMLE by (iteratively) minimising the weighted log-loss function $\mathcal{L}_g[\tilde{Q}^k_n(\epsilon)]$ of the weighted fluctuation $\tilde{Q}^k_n(\epsilon)$ with respect to $\epsilon$ given by the logistic regression
    \begin{equation}
        \tilde{Q}^k_n(\epsilon) = \expit \bigl\{ \logit \tilde{Q}^k_n + \epsilon H' \bigr\}, \quad \text{where} \quad \tilde{Q}^0_n = Q^0_n, \, k \geq 0,
    \end{equation}
    using $\tilde{\epsilon} = \arg \min_{\epsilon} \sum_{i}^n\mathcal{L}_g \bigl[\tilde{Q}^k_n(\epsilon)\bigr](o_i)$ to define $\tilde{Q}^{k+1}_n \equiv \tilde{Q}^k_n(\tilde{\epsilon})$, solves the empirical IF. 
\end{prop}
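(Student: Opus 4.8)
The plan is to show that the first-order stationarity condition defining $\hat\epsilon$ (resp.\ $\tilde\epsilon$) is, after rewriting, \emph{literally} the equation $B_n(Q^*_n,g_n)=0$. So the argument reduces to (i) computing the $\epsilon$-derivative of the (weighted) empirical loss along the prescribed logistic fluctuation, (ii) identifying the resulting score with the clever covariate, and (iii) checking that the covariate-only part of the efficient influence function drops out of $\BP_n$.

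First I would differentiate the log-loss along the logistic path. Writing $f_\epsilon = \expit\{\logit Q^k_n + \epsilon H\}$, the chain rule gives $\partial_\epsilon f_\epsilon = f_\epsilon(1-f_\epsilon)H$, while $\partial_f \mathcal{L}[f](O) = \{f-Y\}/\{f(1-f)\}$, so that
\begin{equation}
    \frac{d}{d\epsilon}\mathcal{L}\bigl[Q^k_n(\epsilon)\bigr](O) = \bigl\{Q^k_n(\epsilon)(O) - Y\bigr\}\,H .
\end{equation}
Since $\hat\epsilon$ minimises $\sum_i \mathcal{L}[Q^k_n(\epsilon)](o_i)$ and $Q^{k+1}_n \equiv Q^k_n(\hat\epsilon)$, its first-order condition reads $\sum_i \{Q^{k+1}_n(o_i) - y_i\}\,H(a_i,w_i) = 0$, i.e.\ $\BP_n\{H(g_n)(A,W)\,(Y - Q^{k+1}_n(A,W))\} = 0$. (For the continuous-outcome, squared-loss case the identity is even more immediate, since $\partial_\epsilon \tfrac12\{Q^k_n(\epsilon)-Y\}^2 = \{Q^k_n(\epsilon)-Y\}\,H$ for a fluctuation that is linear in $\epsilon$ with slope $H$.) To conclude, I would match this to $B_n$: by Eq.~\ref{eq:bias_k_point} the first-order bias of the $k$-point interaction equals $\BP_n\{H(g_n)(A,W)\,(Y-\bar Q_n(A,W))\}$ with $H(g_n)$ the clever covariate of Eq.~\ref{eq:clever_covariate_int}, because the covariate-only summand $\bar Q(a(s),W)-\Psi_{a(s)}(Q)$ of each $D^*_{a(s)}$ has $\BP_n$-mean zero — the plug-in uses the \emph{empirical} marginal $\mathbb{Q}_W$, which is never updated, so $\Psi_{a(s)}(Q^*_n) = \tfrac1n\sum_j \bar Q^*_n(a(s),w_j)$ exactly. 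Hence the stationarity condition is precisely $B_n(Q^{k+1}_n,g_n)=0$ (for the ATE, $H(g_n)(A,W)=(2A-1)/g_n(A,W)$), which is the claim.

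For the weighted statement I would run the same computation with the $g$-weighted log-loss and the weighted fluctuation $\tilde Q^k_n(\epsilon)=\expit\{\logit \tilde Q^k_n + \epsilon H'\}$, $H'(A,W)=2A-1$ (resp.\ Eq.~\ref{eq:clever_covariate_int_weighted}): the factor $1/g_n$ passes through the derivative, yielding first-order condition $\BP_n\{g_n(A,W)^{-1}\,H'(A,W)\,(\tilde Q^{k+1}_n(A,W)-Y)\}=0$. The key observation is the pointwise identity $H'(A,W)/g_n(A,W) = H(g_n)(A,W)$, valid term by term because $\ID\{A=a(s)\}/g_n(A,W) = \ID\{A=a(s)\}/g_n(a(s),W)$; substituting it turns the weighted stationarity condition into $\BP_n\{H(g_n)(A,W)\,(Y-\tilde Q^{k+1}_n(A,W))\} = B_n(\tilde Q^{k+1}_n,g_n)$, which therefore vanishes.

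I do not expect a genuine obstacle here; the work is bookkeeping. The two points needing care are the cancellation of the covariate-only summands of $D^*$ under $\BP_n$ (a consequence of using the NPMLE $\mathbb{Q}_W$ for the covariate law, so it must be flagged rather than taken for granted) and verifying the $H' = g_n\cdot H(g_n)$ identity in the general $k$-point case, not only for the ATE. Finally, the iteration in the statement is immaterial: a single step already forces the score to vanish, so at a fixed point one has $\hat\epsilon = 0$ and the same score identity holds for the limiting $Q^*_n$.
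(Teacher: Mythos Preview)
Your proposal is correct and follows essentially the same route as the paper: differentiate the (weighted) log-loss along the logistic fluctuation via the chain rule, read off the score equation at the minimiser, and in the weighted case use the pointwise identity $H'/g_n = H(g_n)$. Your treatment is in fact slightly more careful than the paper's, since you explicitly explain why the covariate-only summands $\bar Q(a(s),W)-\Psi_{a(s)}(Q)$ drop from $\BP_n$ (the plug-in uses the empirical $\mathbb{Q}_W$), and you phrase the argument directly for the $k$-point clever covariate rather than only for the ATE.
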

\begin{proof}
    To see this, recall that $\expit(x) = \{1+\exp(-x)\}^{-1}$ is the inverse of the function $\logit(p) = \log\{p/(1-p)\}$ for $p \in (0,1)$, and for any $a,b \in \mathbb{R}$ we have the relation
    \begin{equation}\label{eq:expit}
        \frac{d}{d \epsilon}\Big|_{\epsilon=0} \expit(a+b\epsilon) 
        = b \expit(a) \bigl\{1-\expit(a)\bigr\}
    \end{equation}
    as is easily checked.
    First, we minimise the (unweighted) log-loss function:
    \begin{align}
        \frac{d}{d\epsilon}\Big|_{\epsilon = 0} \mathcal{L}\bigl[Q^k_n(\epsilon)\bigr]
        &= -Y \frac{d}{d\epsilon} \log Q^k_n(\epsilon) \Big|_{\epsilon = 0} - (1-Y) \frac{d}{d\epsilon} \log \bigl\{ 1-Q^k_n(\epsilon)\bigr\}  \Big|_{\epsilon = 0} \\
        &= -Y \frac{1}{Q^k_n} \frac{d}{d\epsilon} Q^k_n(\epsilon) \Big|_{\epsilon = 0} + (1-Y) \frac{1}{1-Q^k_n} \frac{d}{d\epsilon} Q^k_n(\epsilon) \Big|_{\epsilon = 0} \\
        &= \left\{ \frac{1-Y}{1-Q^k_n} - \frac{Y}{Q^k_n} \right\} \frac{d}{d\epsilon} Q^k_n(\epsilon) \Big|_{\epsilon = 0}.
    \end{align}
    In order to evaluate this expression, we use the result of Eq.~\ref{eq:expit} to compute
    \begin{equation}
        \frac{d}{d \epsilon} Q^k_n(\epsilon) \Big|_{\epsilon=0} = H Q^k_n \bigl(1- Q^k_n\bigr).
    \end{equation}
    Putting both computations together, we conclude
    \begin{equation}
        \frac{d}{d\epsilon}\Big|_{\epsilon = 0} \mathcal{L}\bigl[Q^k_n(\epsilon)\bigr] = H \Bigl\{ (1-Y)Q^k_n - Y\bigl(1-Q^k_n\bigr) \Bigr\} = - H\bigl(Y - Q^k_n\bigr)
    \end{equation}
    which is the first component of the IF of the ATE.
    Thus if $Q^k_n$ has been updated to the final $Q^*_n$ such that the derivative with respect to $\epsilon$ of the empirical mean of the log-loss function applied to a further fluctuation $Q^*_n(\epsilon)$ vanishes, we have
    \begin{equation}
        0 = \frac{d}{d\epsilon}\Big|_{\epsilon = 0} \sum_{i=1}^n \mathcal{L}\bigl[Q^*_n(\epsilon)\bigr](o_i) = - \sum_{i=1}^n H(o_i) \bigl\{ y_i - Q^*_n(o_i) \bigr\}.
    \end{equation}
    This means that the empirical mean of the IF vanishes, as required.
    
    Next, using similar arguments, we minimise the $g$-weighted log-loss function:
    \begin{equation}
        \frac{d}{d\epsilon}\Big|_{\epsilon = 0} \mathcal{L}_g\bigl[\tilde{Q}^k_n(\epsilon)\bigr] = \frac{1}{g} \left\{ \frac{1-Y}{1-\tilde{Q}^k_n} - \frac{Y}{\tilde{Q}^k_n} \right\} \frac{d}{d\epsilon} \tilde{Q}^k_n(\epsilon) \Big|_{\epsilon = 0}.
    \end{equation}
    In order to evaluate this expression, we again use the result of Eq.~\ref{eq:expit} to compute
    \begin{equation}
        \frac{d}{d \epsilon} \tilde{Q}^k_n(\epsilon) \Big|_{\epsilon=0} = H' \tilde{Q}^k_n \bigl(1- \tilde{Q}^k_n\bigr).
    \end{equation}
    Putting both computations together, noting that $H'/g = H$, we conclude
    \begin{equation}
        \frac{d}{d\epsilon}\Big|_{\epsilon = 0} \mathcal{L}_g\bigl[\tilde{Q}^k_n(\epsilon)\bigr] = H \Bigl\{ (1-Y)\tilde{Q}^k_n - Y\bigl(1-\tilde{Q}^k_n\bigr) \Bigr\} = - H\bigl(Y - \tilde{Q}^k_n\bigr)
    \end{equation}
    and we conclude by the same argument as for the (unweighted) log-loss function.
\end{proof}
A similar argument can be made for solving the IF with (w)TMLE when dealing with ATE for a continuous outcome $Y$. 
In this case, one combines the squared-loss function $\mathcal{L}[f](O) = \bigl\{Y - f(O)\bigr\}^2$ with the fluctuation given by a linear regression $Q^0_n(\epsilon) = Q^{0}_n + \epsilon H$ where $H$ is the clever covariate as above.
For wTMLE, one combines the $g$-weighted squared-loss function
\begin{equation}
    \mathcal{L}_g[f](O) = \frac{1}{g} \bigl\{Y - f(O)\bigr\}^2
\end{equation}
with the fluctuation given by the linear regression $\tilde{Q}^0_n(\epsilon) = Q^{0}_n + \epsilon H'$.


\section{Supplementary Material}
\label{sec:supp_figures}
\beginsupplement

\subsection{PCA Analysis}

\begin{figure}[H]
    \centering
    \includegraphics[scale=0.58]{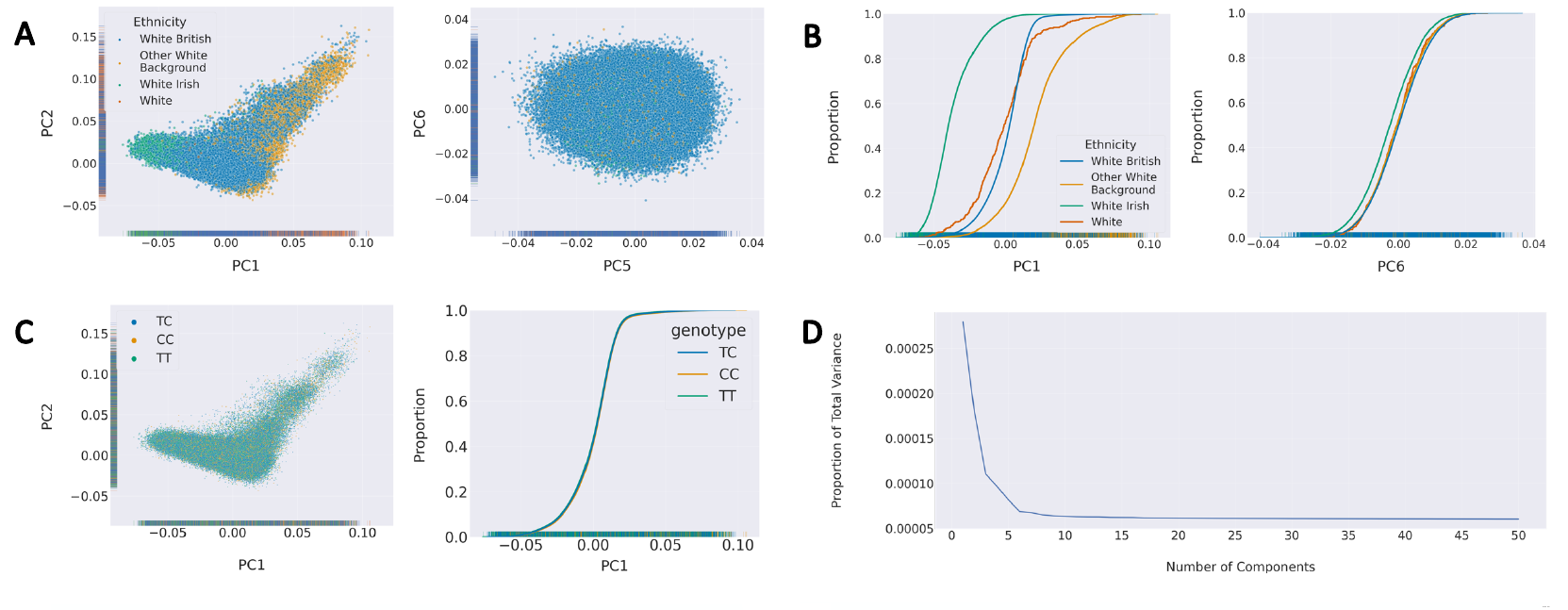}
    \caption{\textbf{Principal Component Analysis of the UK Biobank's white population} ({\bf A}) Principal component analysis labelled by ethnicity. Left: PC1 vs PC2 shows high level of population structure dependent on self-reported ethnicity. Right: PC5 vs PC6 shows a more symmetric shape suggesting that there is no ethnicity structure for PCs $>6$. This is more clearly visible in ({\bf B}) via the cumulative distribution analysis of ethnicity for PC1 and PC6. Left: The cumulative distributions of PC1 conditioned on self-reported ethnicity differ, indicating that variation in ethnicity and variation in PC1 are dependent. Right: In PC6 this separation has disappeared.
   ({\bf C}) A variant specific analysis showing that rs1421085 is not stratified in the population. The same pattern of non-stratification holds for the first 20 principal components, see Supplementary Fig.~\ref{fig:appendix-rs1421085-UKB-pcs}. When this is the case, principal components are not confounding the genotype-phenotype relationship.
   ({\bf D}) This scree plot shows that the proportion of variance explained by each additional PC plateaus after 6 PCs, when subset on `self-reported White' UK Biobank population, indicating that 6 PCs is sufficient to explain the population structure of this cohort.}
   \label{fig:pca}
\end{figure}

\clearpage

\subsection{Supplement to section~\ref{sec:simulation-estimands}}

\begin{filecontents*}{estimands_description.csv}
TYPE,OUTCOME,OUTCOMETYPE,VARIANTS,OUTCOMEFREQ,TREATMENTMINFREQ,JOINTMINFREQ
AIE,Body mass index,Continuous,rs62107261;rs9940128,,4.2e-04,
AIE,Leukocyte count,Continuous,rs3859191;rs9268219,,3.3e-03,
AIE,Multiple sclerosis,Binary,rs10419224;rs59103106,4.0e-03,3.9e-04,2.1e-06
AIE,Multiple sclerosis,Binary,rs3129889;rs62295911,4.0e-03,8.4e-05,2.1e-06
AIE,Other diseases of the digestive system,Binary,rs3129716;rs72926466,5.3e-02,3.0e-04,1.8e-05
AIE,Parkinson's disease,Binary,rs11868112;rs356219;rs6456121,6.5e-03,2.1e-03,1.8e-05
AIE,Parkinson's disease,Binary,rs1732170;rs356219;rs456998;rs8111699,6.5e-03,1.4e-03,6.2e-06
AIE,Psoriasis,Binary,rs10132320;rs974766,1.0e-02,1.0e-04,4.1e-06
AIE,Sarcoidosis (D86),Binary,rs148515035;rs502771,2.3e-03,2.5e-05,2.1e-06
AIE,Skin colour,Count,rs1129038;rs1805008,,3.6e-04,
AIE,Skin colour,Count,rs1805005;rs6059655,,1.6e-04,
AIE,Skin colour,Count,rs1805007;rs6088372,,3.0e-04,
AIE,Type 2 diabetes,Binary,rs117737810;rs4506565,8.8e-03,7.8e-05,2.1e-06
AIE,psoriasis,Binary,rs10132320;rs974766,1.2e-02,1.0e-04,2.1e-06
AIE,sarcoidosis,Binary,rs148515035;rs502771,2.1e-03,2.5e-05,2.1e-06
ATE,Body mass index,Continuous,rs62107261,,2.2e-03,
ATE,Body mass index,Continuous,rs9940128,,1.8e-01,
ATE,Leukocyte count,Continuous,rs3859191,,2.2e-01,
ATE,Leukocyte count,Continuous,rs9268219,,1.5e-02,
ATE,Multiple sclerosis,Binary,rs3129889,4.0e-03,2.0e-02,2.5e-04
ATE,Multiple sclerosis,Binary,rs62295911,4.0e-03,3.1e-03,2.7e-05
ATE,Other diseases of the digestive system,Binary,rs3129716,5.3e-02,2.1e-02,1.9e-03
ATE,Other diseases of the digestive system,Binary,rs72926466,5.3e-02,1.6e-02,8.4e-04
ATE,Sarcoidosis (D86),Binary,rs148515035,2.3e-03,3.7e-04,2.1e-06
ATE,Sarcoidosis (D86),Binary,rs502771,2.3e-03,7.5e-02,3.1e-04
ATE,Type 2 diabetes,Binary,rs117737810,8.8e-03,9.2e-04,1.0e-05
ATE,Type 2 diabetes,Binary,rs4506565,8.8e-03,1.0e-01,1.4e-03
ATE,sarcoidosis,Binary,rs148515035,2.1e-03,3.7e-04,4.1e-06
ATE,sarcoidosis,Binary,rs502771,2.1e-03,7.5e-02,3.2e-04
\end{filecontents*}

\begin{table}[H]
    \resizebox{\columnwidth}{!}{%
    \csvreader[
        tabular = |c|c|c|c|c|c|c|,
        table head = \hline Type & Outcome & \multicolumn{1}{|p{2cm}|}{\centering Outcome \\ Type} & Variants & \multicolumn{1}{|p{2cm}|}{\centering Outcome \\ Freq} & \multicolumn{1}{|p{2cm}|}{\centering Variants \\ Min \\ Freq} & \multicolumn{1}{|p{2cm}|}{\centering Joint \\ Min \\ Freq} \\\hline\hline,
        table foot = \hline
    ]{estimands_description.csv}{
    TYPE=\TYPE,OUTCOME=\OUTCOME,OUTCOMETYPE=\OUTCOMETYPE,VARIANTS=\VARIANTS,OUTCOMEFREQ=\OUTCOMEFREQ,TREATMENTMINFREQ=\TREATMENTMINFREQ,JOINTMINFREQ=\JOINTMINFREQ}{
    \TYPE & \OUTCOME & \OUTCOMETYPE & \VARIANTS & \OUTCOMEFREQ & \TREATMENTMINFREQ & \JOINTMINFREQ}
    }
    \caption{The $29$ estimands used across the simulation study. The ``Variants Min Freq" column represents the minor genotype frequency for the variants in the estimand. When the outcome is binary, the frequency is provided as well as the ``Joint Min Freq". The latter represents the minor frequency of joint (genotype, outcome).}
    \label{table:simulation_estimands}
\end{table}

\subsection{Supplement to section~\ref{sec:realistic-simulation}}

\begin{algorithm}[H]
    \caption{\textbf{Sieve Neural Network Estimator.} The undefined ``train" function corresponds to each neural network's training loop and implicitly uses early-stopping to control the number of training epochs of the proposed architecture.}
    \label{alg:sieve_network}
    \begin{algorithmic}
    \Procedure{SNNE}{$candidateModels, dataset, maxSievePatience$}
    
    \State $trainingSet, validationSet \gets split(dataset)$
    \State $bestModel \gets candidateModels[1]$
    \State $bestValidationLoss \gets train(bestModel, trainingSet, validationSet)$
    \State $sievePatience \gets 0$
    
    \For{$model \in candidateModels[2:end]$}
        \State $validationLoss \gets train(model, trainingSet, validationSet)$
        
        \If{$validationLoss \leq bestValidationLoss$}
            \State $bestValidationLoss \gets validationLoss$
            \State $bestModel \gets model$
            \State $sievePatience \gets 0$
        \Else
            \State $sievePatience \gets sievePatience + 1$
        \EndIf

        \If{$sievePatience == maxSievePatience$}
            \State \textbf{break}
        \EndIf
    \EndFor
        \State \textbf{return} $bestModel$
    \EndProcedure
    
    \end{algorithmic}

\end{algorithm}

\begin{figure}[H]
    \centering
    \includegraphics[scale=0.2]{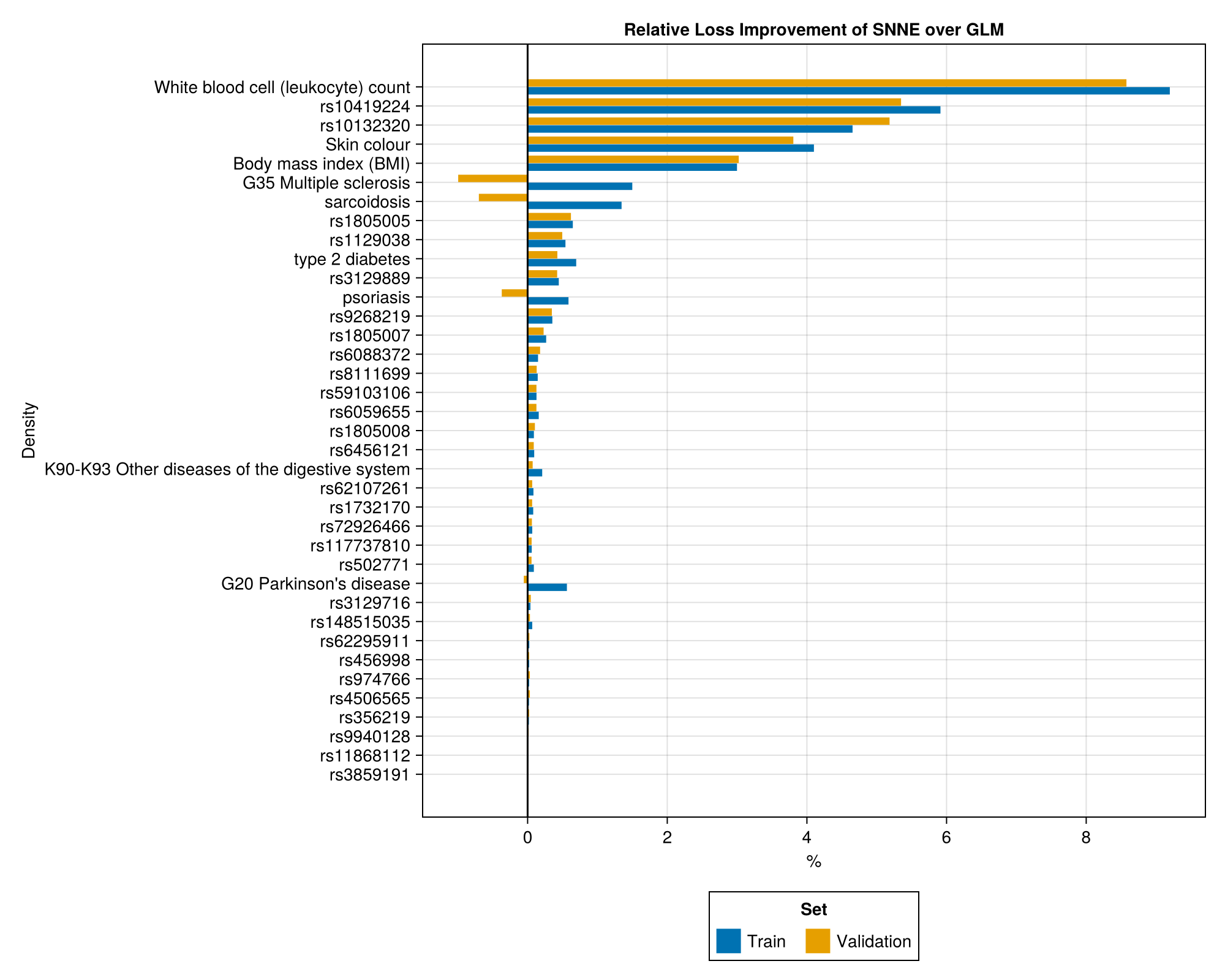}
    \caption{Comparison of the empirical loss between the proposed Sieve Neural Network Estimator and a Generalised Linear Model baseline. For each density (y-axis), results are presented as a relative improvement of the SNNE over the GLM (x-axis). Bars facing to the right of the thick 0-line indicate an improvement while bars facing to the left indicate a deterioration of the loss. Both Train (Blue) and Validation (Yellow) set improvements are presented. These results validate the proposed density estimation strategy as an effective flexible and data-adaptive method.}
    \label{fig:density-estimates-comparison}
\end{figure}

\begin{figure}
    \centering
    \includegraphics[scale=0.3]{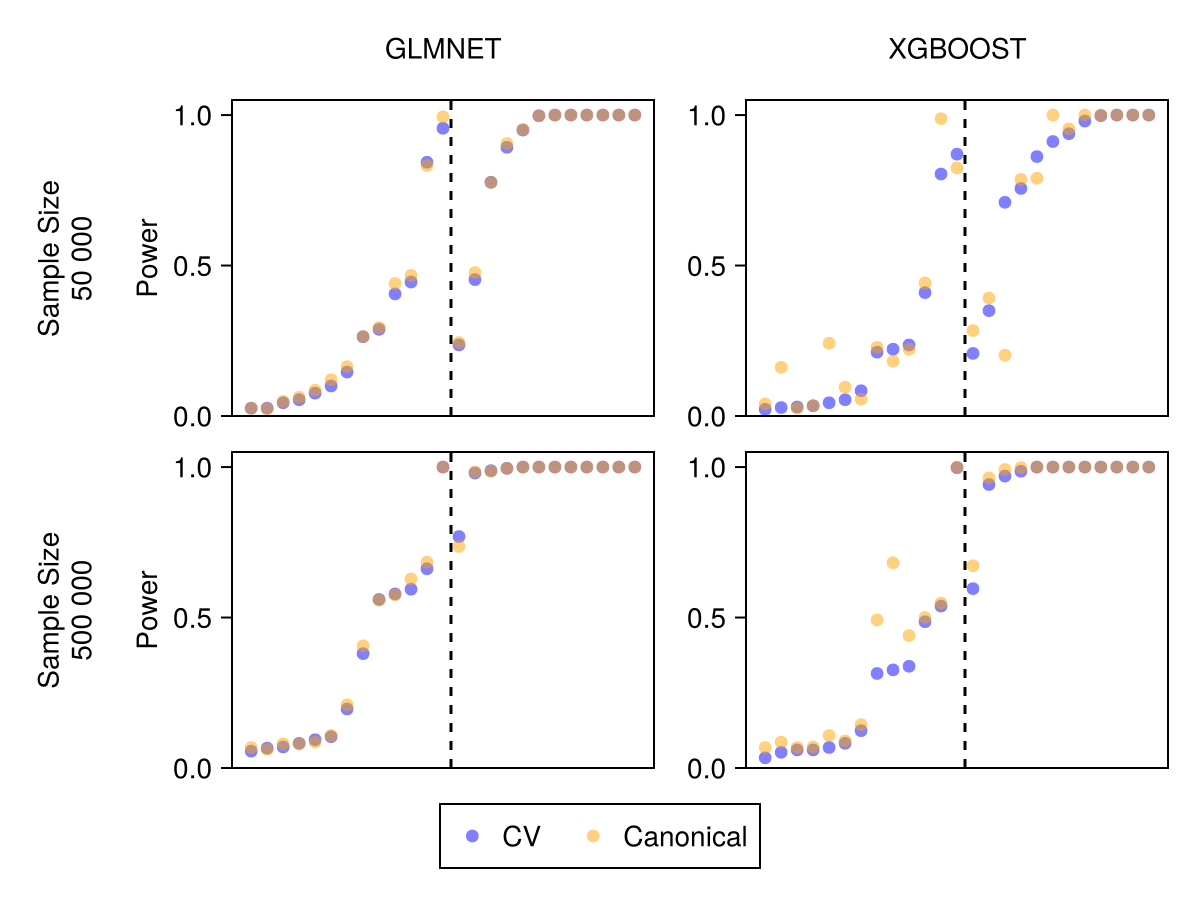}
    \caption{\textbf{Power analysis of OSE.} The plot is organised in 4 quadrants. Rows indicate sample sizes with $n = 50\, 000$ (top) and $n = 500\, 000$ (bottom), columns indicate the model used to fit nuisance functions with GLMNet (left) and XGBoost (right), and colour indicates the resampling scheme, \ie, cross-validated (blue) and canonical (orange). Each dot corresponds to a single estimand and the dashed lines separate AIEs (left) and ATEs (right).}
    \label{fig:power-ose}
\end{figure}

\clearpage

\subsection{Supplement to Fig.~\ref{fig:rs1421085-plots}: PCs for rs1421085}

\begin{figure}
    \centering
    \includegraphics[width=\textwidth]{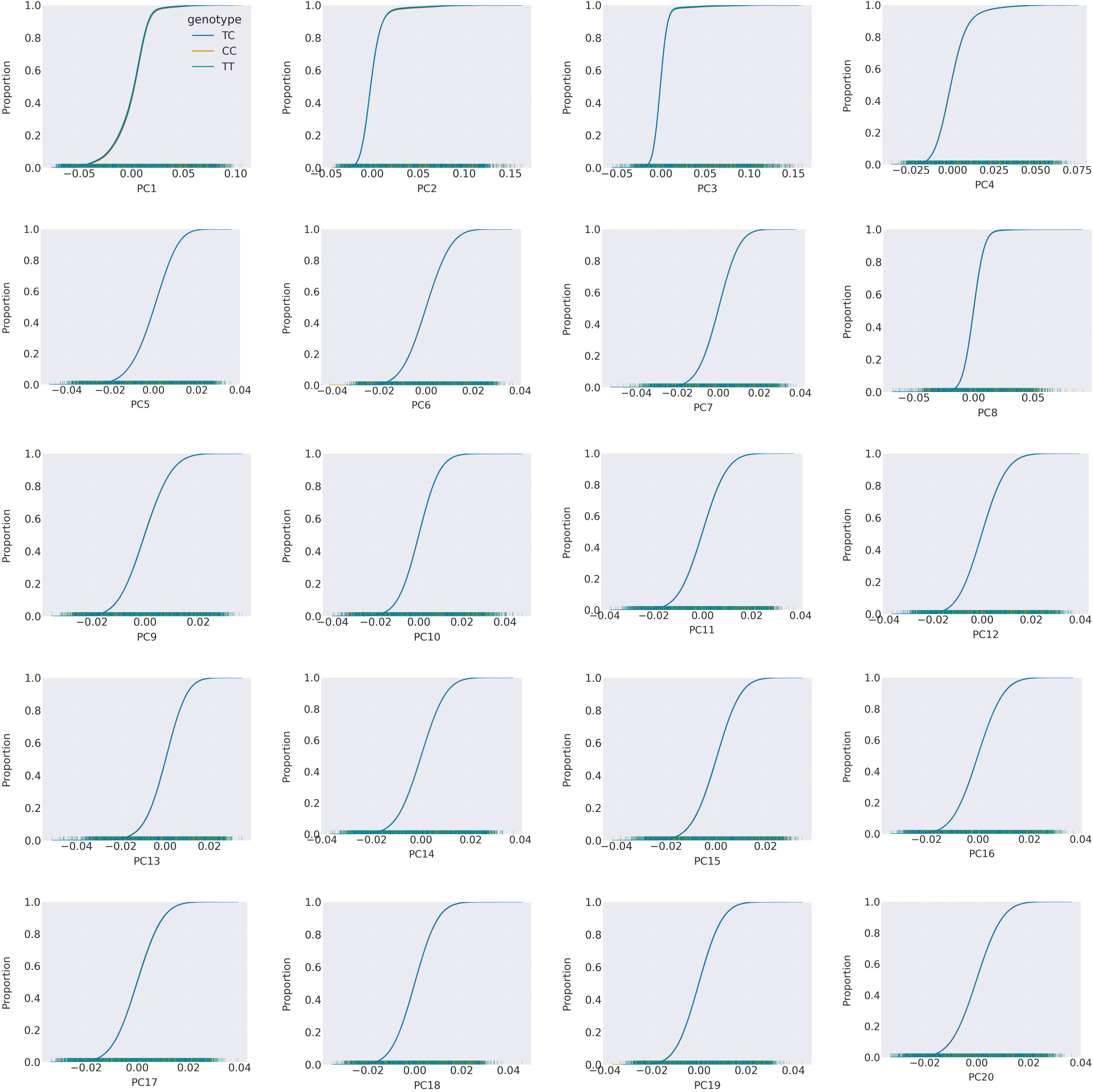}
    \caption{\textbf{Cumulative density functions of the first 20 principal components stratified by rs1421085 genotype.} Population stratification plays no discernible role in the genotype distribution of the FTO variant rs1421085 in the UKB population with white ethnic background.}
    \label{fig:appendix-rs1421085-UKB-pcs}
\end{figure}

\subsection{Supplement to Fig.~\ref{fig:rs1421085-plots}: TarGene results}
Fig.~\ref{fig:appendix-rs1421085-plots}A provides further information on the Sieve Plateau variance corrected p-values, whereas Fig.~\ref{fig:appendix-rs1421085-plots}B shows an example Sieve Variance Plateau curve.
Fig.~\ref{fig:appendix-rs1421085-plots}C illustrates the difference between the initial estimate, reported by Super Learning, and TMLE after the targeting step.
Fig.~\ref{fig:appendix-rs1421085-plots}D provides a histogram of the genetic relationship matrix values on the UK Biobank population.

\begin{figure}[H]
    \centering
    \includegraphics[width=\textwidth]{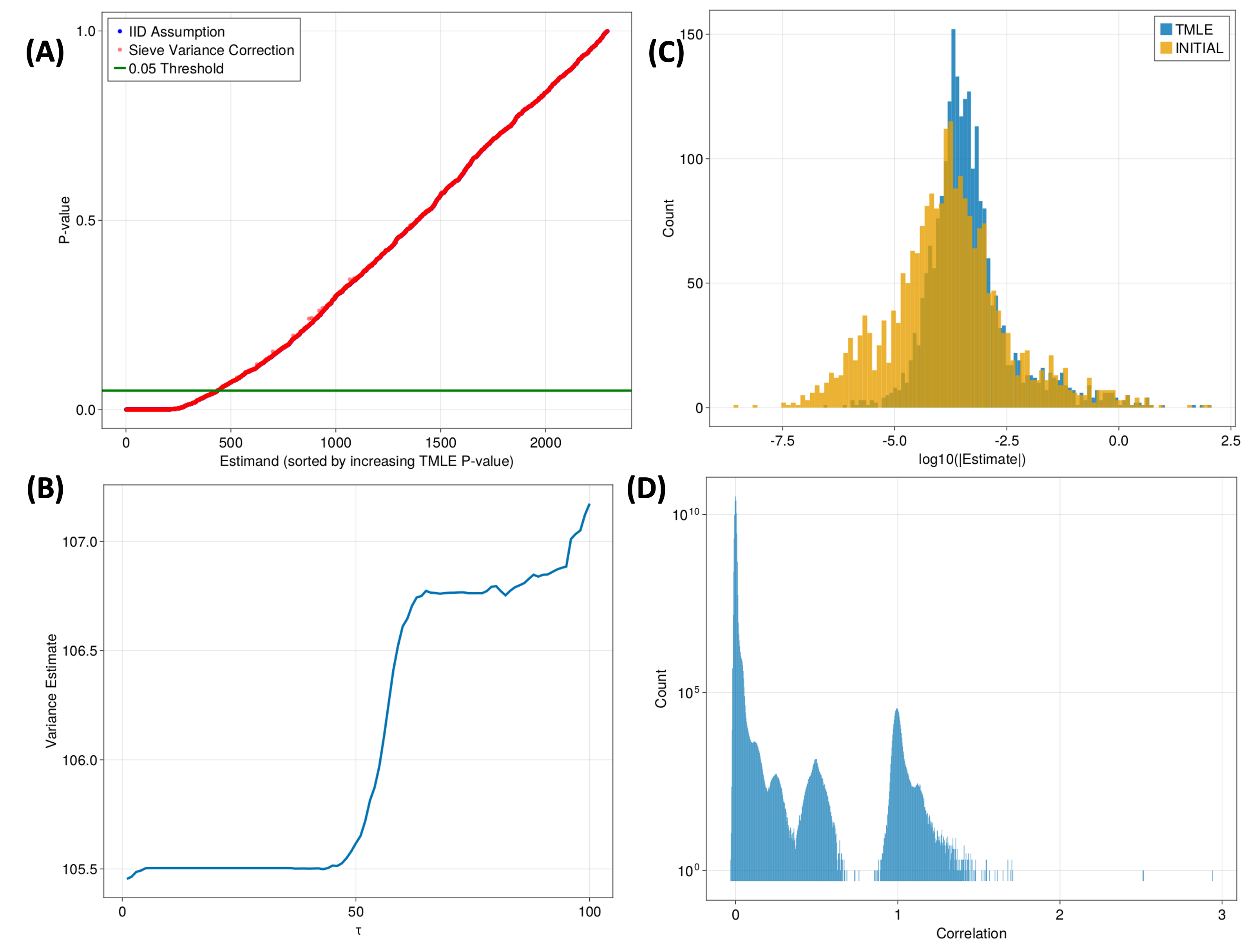}
    \caption{\textbf{Supplementary figures related to Targeted Maximum Likelihood Estimation and Sieve Variance Plateau correction.}\\
    ({\bf A})~\textbf{Sieve Variance Plateau corrected p-values.} Evidence of no significant difference across traits for rs1421085 between corrected p-values and p-values assuming individuals are independent.
    ({\bf B})~\textbf{Sieve Variance Plateau curve.} A sample Sieve-Variance-Plateau curve for body mass index across 100 different thresholds.
    ({\bf C}) \textbf{Difference between initial and TMLE estimates.} The initial estimate is obtained by plugin of our first Super Learning estimate for $Q$. In more than $78\%$ of the cases, TMLE is driving the initial estimate towards more extreme values. This is a potential piece of evidence for the omnigenic model~\citep{Omnigenic_model}.
    ({\bf D}) \textbf{Distribution of the Genetic Relationship Matrix.} Most individuals in the UK-Biobank have low genetic similarity.
    } 
    \label{fig:appendix-rs1421085-plots}
\end{figure}

\begin{figure}[H]
    \centering
    \includegraphics[scale=1.0]{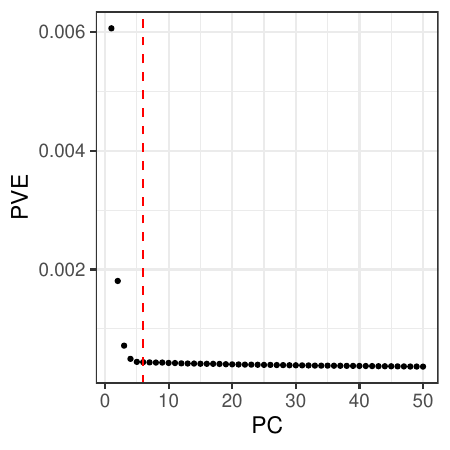}
    \caption{\textbf{All of Us (AoU) Principal Component Analysis.}
    Principal components were computed using FlashPCA2 across all genotyped SNPs, excluding SNPs in linkage disequilibrium with rs1421085, for 122,752 participants in the All of Us cohort. The y-axis shows the proportion of variance explained (y-axis; PVE) across the first 50 principal components (x-axis; PCs), with a dashed red line at 6 PCs. The proportion of variance explained plateaus at 6 PCs, and is sufficient to capture confounding due to population stratification in the All of Us cohort study described in this article. This is consistent with results found in the UKB study for the same variant.
    } 
    \label{fig:appendix-rs1421085-AoU-screeplot}
\end{figure}

\begin{figure}[H]
    \centering
    \includegraphics[scale=0.8]{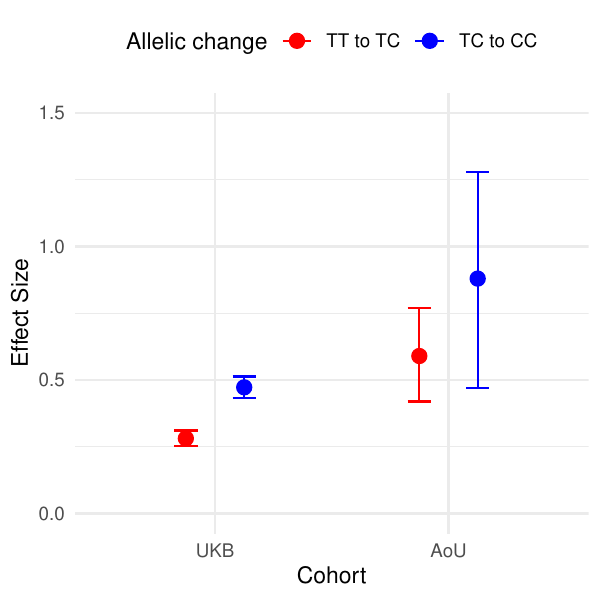}
    \includegraphics[scale=0.63]{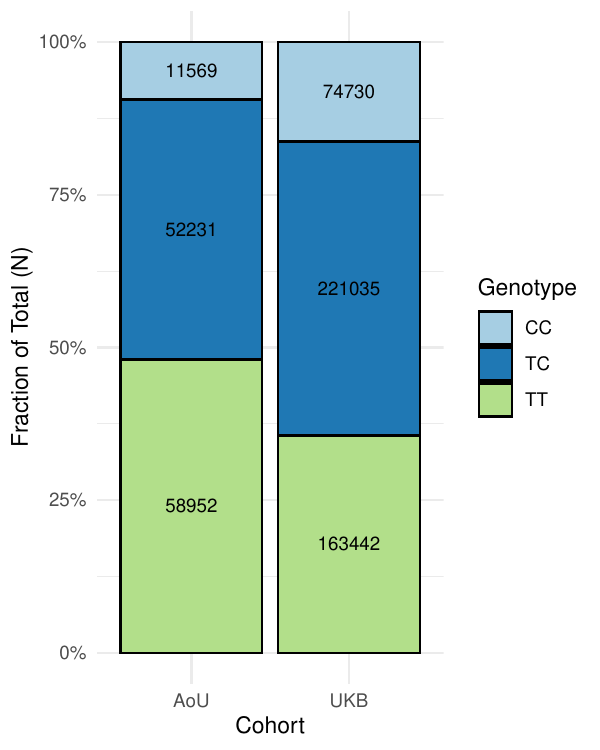}
    \caption{\textbf{All of Us (AoU) cohort compared to UKB for the effect of rs1421085 on BMI.}
    \textbf{(Left)} Estimates of the individual allelic effects of rs1421085 on BMI for the UKB and AoU cohort with 95\% confidence intervals. Confidence intervals do not overlap for TT-to-TC effect estimates (red), but they do overlap for TC-to-TT estimates (blue) across cohorts. A  higher degree of uncertainty is found in estimates for the AoU cohort. Although a significant non-linear effect was found in the UKB for this association, this is not reproduced in the AoU cohort. This may be due to reduced power as a result of AoU's lower sample size, or due to increased variation as a result of larger environmental effects. \textbf{(Right)} Decrease in minor allele frequency ($0.40$ in UKB; $0.30$ in AoU) and sample size ($459,\!207$ in UKB; $122,\!752$ in AoU). Here we show a breakdown of sample sizes for each genotype across the UKB and AoU cohort, with the fraction of the cohort shown on the y-axis and the sample size for each group labelled in text.
    } 
    \label{fig:appendix-rs1421085-BMI-AoU-UKB-effects}
\end{figure}

\clearpage

\subsection{Supplement to Section~\ref{sec:simulation_results}}.

\subsubsection{Definition of Bootstrap Bias, Variance and MSE estimators.} \label{sec:bootstrap_estimators}

Let $n$ be the number of independent and identically distributed samples ($n=50\,000$ or $500\,000$) and let $B$ be the number of bootstrap resamples ($B=500$).
For a given estimand and estimator, each bootstrap resample yields a $p$-dimensional vector of estimates denoted by $\hat{\mathbf{\Psi}}_{b,n}$.
For this estimate, a true value $\mathbf{\Psi}_0$ is available.
We denote by $\mathbf{\Sigma}_{B, n}$ the $p \times p$-dimensional sample covariance matrix of the estimator $\hat{\mathbf{\Psi}}_{n}$ which is obtained from the $B$ estimate vectors $\hat{\mathbf{\Psi}}_{b,n}$ for $b = 1, 2, \ldots, B$. 
We then estimate the bootstrap bias-squared, the variance and mean-squared error of our high-dimensional estimators via the following formulas:
\begin{align} 
    \widehat{\Bias}^2_{B, n} &= \frac{1}{B}\sum_{b=1}^B ||\hat{\mathbf{\Psi}}_{b,n} - \mathbf{\Psi}_0||_2^2 \label{eq:bootstrap-bias}\\
    \widehat{\Var}_{B, n} &= \Tr(\mathbf{\Sigma}_{B, n}) \label{eq:bootstrap-var}\\
    \widehat{\MSE}_{B, n} &= \widehat{\Bias}^2_{B, n} + \widehat{\Var}_{B, n} \label{eq:bootstrap-mse}
\end{align}
Here $\Tr$ denotes the trace of a matrix.
In particular, the bootstrap variance $\widehat{\Var}_{B,n}$ is the sum of the diagonal elements of the sample covariance matrix $\mathbf{\Sigma}_{B,n}$, \ie, the sum of the component variance estimates.

\subsubsection{Comparison of influence curve-based and bootstrap variance estimates.}

We provide a comparison between the variance estimates obtained from the influence curve as per Eqs.~\ref{eq:variance_iid} and~\ref{eq:cv_variance_iid}, and the bootstrap variance from resampling as per Eq.~\ref{eq:bootstrap-var}.

\begin{figure}[h!]
    \centering
    \includegraphics[scale=0.5]{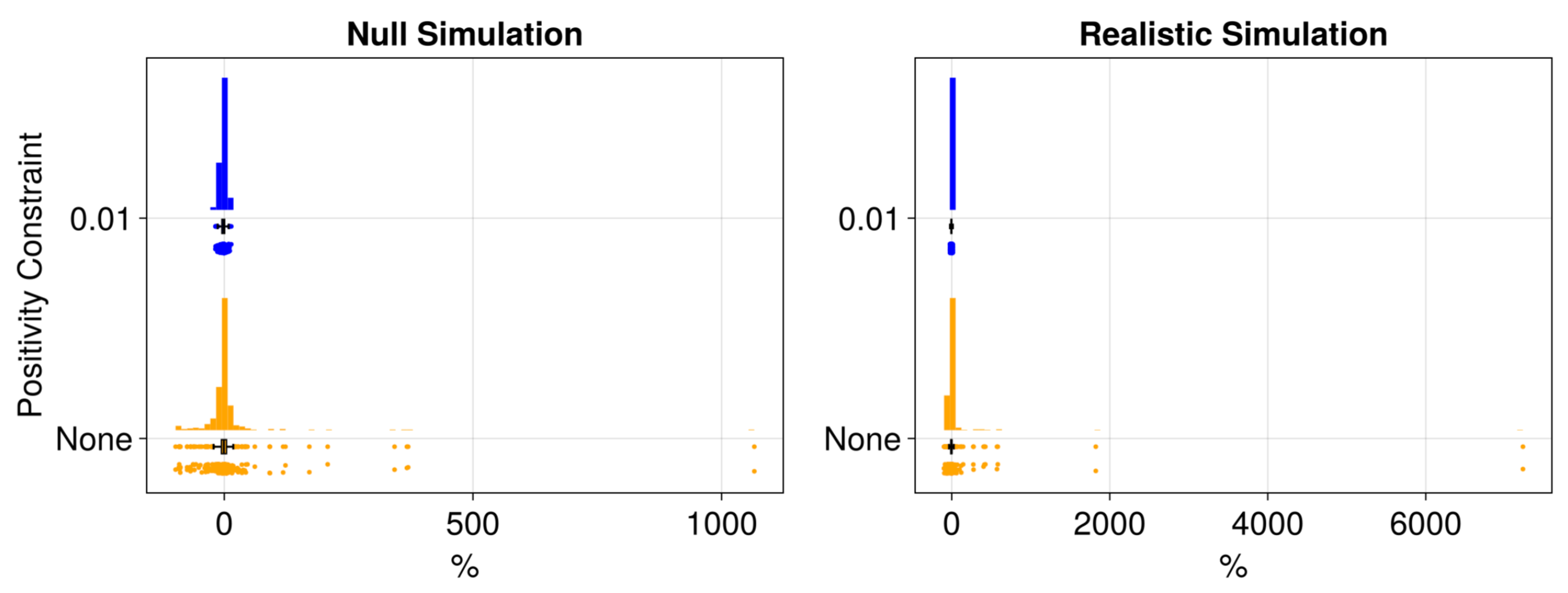}
    \caption{Relative difference in percentages between variance estimates based on influence curve and bootstrap (Eq.~\ref{eq:bootstrap-var}). When no positivity constraint is applied, variance estimates based on the influence curve may become extreme. When estimands are constrained at the $0.01$ positivity threshold level, variance estimates get closer to each other.}
    \label{fig:appendix-bootstrap-vs-infcurve-var}
\end{figure}

\newpage


\newpage
\bibliographystyle{apalike} 
\bibliography{refs.bib}

\end{document}